\newcommand{\bbbr}{\mathbb R}
\newtheorem{theorem}{Theorem}
\newtheorem{prop}{Proposition}
\newtheorem{lemma}{Lemma}
\newtheorem{cor}{Corollary}
\newcommand{\comm}[1]{}
\renewcommand\a{\alpha}
\renewcommand\b{\beta}
\newcommand\ga{\gamma}
\newcommand\C{{\mathbb C}}
\renewcommand\d{\delta}
\newcommand\D{\mathcal{D}}
\newcommand\E{{\mathcal E}}
\newcommand\op[1]{\mathop{\rm #1}\nolimits}
\newcommand\p{\partial}
\newcommand\R{{\mathbb R}}
\begin{document}

\title{Involutive scroll structures on solutions of\\ 4D dispersionless integrable hierarchies}

\author{E.V. Ferapontov, B. Kruglikov}
     \date{}
     \maketitle
     \vspace{-5mm}
\begin{center}
{\small
Department of Mathematical Sciences \\ Loughborough University \\
Loughborough, Leicestershire LE11 3TU \\ United Kingdom\\
  \ \\  \ \\
Department of Mathematics and Statistics\\
Faculty of Science and Technology\\
UiT the Arctic University of Norway\\
Troms\o\ 90-37, Norway\\
[2ex]
e-mails: \\[1ex]  \texttt{E.V.Ferapontov@lboro.ac.uk}\\
\texttt{Boris.Kruglikov@uit.no} }

\bigskip

\end{center}

\begin{abstract}

A rational normal scroll structure on an $(n+1)$-dimensional  manifold $M$ is defined as a
field of rational normal scrolls of degree $n-1$ in the projectivised cotangent bundle $\mathbb{P} T^*M$.
We show that geometry of this kind naturally arises on solutions  of
various 4D dispersionless integrable hierarchies of  heavenly type equations. 
In this context, rational normal scrolls  coincide with the characteristic varieties (principal symbols) of the hierarchy.
Furthermore, such structures automatically satisfy an additional property of involutivity.

Our main result states that involutive scroll structures are themselves  governed by a dispersionless integrable hierarchy, namely, the hierarchy of conformal self-duality equations.

\bigskip

\noindent MSC: 35Q75, 37K10,  53A40,   53C28.

\noindent {\bf Keywords:}
 Dispersionless Integrable Hierarchy, Heavenly  Equations, Conformal Self-Duality Equations, Characteristic Variety, Rational Normal Scroll Structure, Involutivity,
Dispersionless Lax Representation.
\end{abstract}

\newpage

\tableofcontents

\section{Introduction}

\subsection{Rational normal scrolls}
\label{RNS0}

In projective space $\mathbb{P}^n$ choose two complementary linear subspaces $\mathbb{P}^k$ and $\mathbb{P}^l$,
where $n=k+l+1$. 
Choose two rational normal curves of degrees $k$ and $l$ in these subspaces, and choose an isomorphism $\phi$ between them. 
Then the rational normal scroll $S_{k,l}$ is a ruled surface of degree $n-1$ consisting of all lines joining the corresponding 
points $x$ and $\phi(x)$, where $x$ varies over the first rational normal curve. Thus, $S_{k,l}$ is
a $\mathbb{P}^1$ bundle over $\mathbb{P}^1$.

We will assume $k,l>0$ in order for $S_{k,l}$ to be smooth (in the opposite case it is a cone).
If $k < l$ then the rational normal curve of degree $k$ is uniquely determined by the rational normal scroll 
and is called the directrix of the scroll \cite{GH}.

We recall that del Pezzo's theorem implies that scrolls $S_{k,l}$ are the only irreducible smooth ruled surfaces  of degree ${\rm deg} (S_{k,l})=k+l=n-1$ spanning $\mathbb{P}^n$. 

Rational normal scrolls $S_{k,l}$ are unambiguously characterized by their degrees and  automorphism groups.  
The latter are
 $$
\op{Aut}(S_{k,k})=PO_{2,2}\ \text{ for }k=l\quad\text{and}\quad
\op{Aut}(S_{k,l})=GL_2\ltimes\Bbbk^{l-k+1}\ \text{ for }k<l,
 $$
where $\Bbbk=\C$ as in classical algebraic geometry or $\Bbbk=\R$ as {
in the main part of our paper. We will review the necessary information about scrolls in Section \ref{sec:scrollsym}.

\subsection{Scroll structures on manifolds}

Given an $(n+1)$-dimensional manifold $M$, a (rational normal) scroll structure is a distribution of rational normal scrolls $S_{k,l}$ in the projectivised cotangent bundle  $\mathbb{P}^n=\mathbb{P}T^*M$ (for some $k, l$ such that $n=k+l+1$). Explicitly, a scroll structure can be parametrised as $\alpha\,\omega(\lambda)+\beta\,\phi (\lambda)$ where $\alpha$ and $\beta$ are some functions and
$\omega(\lambda),\ \phi(\lambda)$ are one-forms on $M$  polynomial in $\lambda$ of degree $k$ and $l$, respectively:
\begin{equation}\label{scroll}
\omega (\lambda)=\omega_0+\lambda \omega_1+ \dots + \lambda^{k}\omega_{k}, \qquad
\phi (\lambda)=\phi_0+\lambda \phi_1+ \dots + \lambda^{l}\phi_{l};
\end{equation}
here $\{\omega_i, \, \phi_j\}$ form  a basis of 1-forms  (a coframe) on $M$. The parameter $\lambda$ and the 1-forms $\omega(\lambda), \ \phi(\lambda)$ are defined up to transformations  $\lambda \to  \frac{a\lambda+b}{c\lambda +d}, \ \omega(\lambda) \to r(c\lambda+d)^{k}\omega(\lambda), \ \phi(\lambda) \to s(c\lambda+d)^{l}\phi(\lambda)$, where $a, b, c, d, r, s$  are functions on $M$
such that $ad-bc, r, s$ are nonzero. Without any loss of generality one can assume $ad-bc=1$. Furthermore, if $k<l$, we have an additional transformation freedom
$\phi(\lambda)\to \phi(\lambda)+p_{l-k}(\lambda)\omega(\lambda)$ where $p_{l-k}(\lambda)$ is a polynomial in $\lambda$ of degree $l-k$ whose coefficients are arbitrary functions on $M$.

Given a 4D dispersionless integrable hierarchy of heavenly type equations, it will be demonstrated  that the corresponding characteristic variety (zero locus of the principal symbol) determines canonically a scroll structure on every solution. It turns out that the scroll structures  arising in this way are not arbitrary and must satisfy an important property of {\it involutivity}.

\subsection{Involutive scroll structures (ISS)}
\label{sec:invol}

For every $x\in M$, the equations $\omega(\lambda)=\phi(\lambda)=0$ define a one-parameter family of 
linear subspaces of codimension two in $T_xM$ parametrised by $\lambda$; we will call them $\alpha$-subspaces.  
These subspaces are dual to the lines generating the scroll. A codimension two submanifold of $M$ is said
to be an {\it $\alpha$-manifold} if all its tangent spaces are $\alpha$-subspaces. The following definition is motivated by the theory of half-flat conformal structures \cite{Penrose}, involutive  $GL(2, \bbbr)$ structures \cite{Bryant2, Krynski,  FK1} and  half-flat Cayley structures \cite{KMa}.

\medskip

\noindent {\bf Definition.}
A scroll structure is said to be {\em involutive\/} 
if every $\alpha$-subspace is tangential to some $\alpha$-manifold.
We use abbreviation ISS for such structures.

\medskip

One can show that $\alpha$-manifolds of an involutive scroll structure depend on one arbitrary function of two
variables (see \cite{FK} for a similar statement in the context of involutive $GL(2)$-structures). The existence of  $\alpha$-manifolds suggests that involutive
scroll structures are amenable to twistor-theoretic methods.

\medskip

In the most general terms, the construction to be discussed in this paper can be summarised as follows. Let  $u$ be a (vector)-function on $(n+1)$-dimensional manifold $M$ (with local coordinates  $x^1,\dots, x^{n+1}$) that satisfies a hierarchy of compatible PDEs. In examples  below, the PDEs for $u$ will constitute an involutive system $\Sigma$ of second-order equations whose characteristic variety determines a scroll structure on every solution of $\Sigma$. The (dispersionless) integrability of $\Sigma$ will be understood as the existence 
of a  distribution $\D$ (Lax distribution) of codimension three on a $\mathbb{P}^1$-bundle $\hat M$ over $M$ (with local coordinates  $x^1,\dots, x^{n+1}, \lambda)$, such that $\D$ is integrable modulo $\Sigma$.
The scroll structures arising in this way are automatically involutive: the corresponding  $\alpha$-manifolds arise as projections to $M$ of the integral manifolds of $\D$ from $\hat M$.

Conventionally, the integrability of $\Sigma$ is expressed via $\lambda$-dependent vector fields by representing $\D$ 
as a span on $n-1$ vector  fields $V_i$ of the form 
$$
V_i=V_i^0+g_i^0(u, \lambda)\partial_{\lambda}, \qquad V_i^0=g_i^k(u, \lambda) \partial_k;
$$
here $\partial_k=\partial_{x^k}$ and the coefficients $g_i^j$ of $V_i$ depend on $u$, its derivatives and $\lambda$. 
The Frobenius  integrability condition, $[V_i, V_j]\in\D$, should be satisfied modulo $\Sigma$ (identically in 
$\lambda$). In other words, vector fields $V_i$ form an integrable distribution on every solution of $\Sigma$.

\subsection{Examples of ISS via the general heavenly hierarchy}
\label{Sghe}

The general heavenly hierarchy  is a collection of PDEs
 \begin{equation}
\begin{array}{c}
(a_k-a_j)(a_l-a_i)u_{jk}u_{il}+(a_i-a_k)(a_l-a_j)u_{ik}u_{jl}+(a_j-a_i)(a_l-a_k)u_{ij}u_{kl}=0,
\end{array}\label{gh}
 \end{equation}
one equation for every quadruple of distinct indices $i,j,k,l \in \{1, \dots, n+1\}$.  Here $u$ is a scalar function 
on $(n+1)$-dimensional manifold $M$ with local coordinates $x^i$, second-order partial derivatives of $u$ are 
denoted $u_{ij}=u_{x^ix^j}$, and $a_i$ are arbitrary constants. This hierarchy was constructed in \cite{Bogdanov}.

System (\ref{gh}) is equivalent to the requirement that the rank of the $(n+1)\times (n+1)$ skew-symmetric matrix  $\{(a_i-a_j)u_{ij}\}$ is equal to two. For any fixed values of indices $i,j,k,l$, equation (\ref{gh}) is known as the general heavenly equation; it was first derived in \cite{Schief} and later reappeared in \cite{DF} as the general case in the classification of 4D integrable symplectic Monge-Amp\'ere equations.
The characteristic variety of  system (\ref{gh}) is the intersection of  quadrics,
 \begin{equation}
 \begin{array}{rr}
(a_k-a_j)(a_l-a_i)(u_{jk}p_ip_l+u_{il}p_jp_k)
+(a_i-a_k)(a_l-a_j)(u_{ik}p_jp_l+u_{jl}p_ip_k)&\\[3pt]
{}+(a_j-a_i)(a_l-a_k)(u_{ij}p_kp_l+u_{kl}p_ip_j)&\!\!\!=0.
\label{ghchar}
 \end{array}
 \end{equation}
Equations (\ref{ghchar}) are equivalent to the requirement that the rank of the $(n+1)\times (n+1)$ skew-symmetric matrix  
$\{(a_i-a_j)(u_{ij}+p_ip_j)\}$ is equal to two.
They specify a smooth rational ruled surface of degree $n-1$ in 
$\mathbb{P}^{n}=\mathbb{P}T^*M$ which depends on the solution $u$, equivalently, a rational normal scroll. 
For $n=3$ we obtain a scroll $S_{1,1}$ which is nothing but a quadric in $\mathbb{P}^3$.  
For $n=4$ and $n=5$ we have the scrolls $S_{1,2}$ and $S_{2,2}$, respectively. In general, for $n=2k$ and $n=2k+1$, equations (\ref{ghchar}) define the scrolls $S_{k-1,k}$ and $S_{k,k}$, respectively.

 \begin{theorem}\label{ISSGPBE}
System \eqref{gh} is involutive and has involutive scroll structure on every solution.
Truncation of the hierarchy to dimension $n+1$ yields a family of ISS depending on $n-1$ functions of 3 variables.
 \end{theorem}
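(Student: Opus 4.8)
The plan is to prove Theorem~\ref{ISSGPBE} in three stages: (1) verify that system \eqref{gh} is formally integrable and involutive in the sense of the Cartan--Kuranishi theory, computing the dimension of its solution space; (2) exhibit an explicit Lax distribution $\D$ on the $\mathbb{P}^1$-bundle $\hat M$ and check it is integrable modulo \eqref{gh}, so that the characteristic scroll \eqref{ghchar} becomes an involutive scroll structure on every solution by the general mechanism described in \S\ref{sec:invol}; (3) count the functional moduli of the truncated hierarchy.

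\medskip

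\noindent\textbf{Step 1: Involutivity of the PDE system.}
First I would recall the reformulation: \eqref{gh} says the skew matrix $A(u)=\{(a_i-a_j)u_{ij}\}$ has rank $2$. Rank-two skew matrices are parametrised locally by two covectors, $A=\zeta\wedge\eta$, so the equations are a determinantal variety cut out by the $4\times 4$ Pfaffians (equivalently the vanishing of all $3\times 3$ minors after fixing one $2\times 2$ block). The plan is to run the prolongation-projection algorithm: compute the symbol $\g_2\subset S^2(T^*M)$ at a generic point, its prolongation $\g_3$, and verify that $\dim\g_3=\binom{n+1}{3}$ matches the number of independent third-order consequences so that no new integrability conditions appear at order three; then check Cartan's test (the symbol is involutive). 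A cleaner route, which I expect to follow, is to use the rank-$2$ parametrisation directly as a quasilinear first-order system for the pair of covectors, for which involutivity is transparent, and then argue equivalence with \eqref{gh} on the open set where $A$ has constant rank $2$. The dimension of the solution space should come out as $n-1$ functions of $3$ variables, consistent with the Cartan characters; this is the number I need for the truncation count.

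\medskip

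\noindent\textbf{Step 2: Lax distribution and involutivity of the scroll.}
By the construction in \S\ref{sec:invol}, it suffices to produce a codimension-three distribution $\D$ on $\hat M$, spanned by $n-1$ vector fields $V_i=g_i^k(u,\lambda)\partial_k+g_i^0(u,\lambda)\partial_\lambda$, whose pairwise brackets lie in $\D$ modulo \eqref{gh}, and such that the annihilator of $\D$ projected to $T^*_xM$ is exactly the scroll \eqref{ghchar}. I would take the Lax pair from \cite{Bogdanov} (the general heavenly hierarchy was constructed there with its dispersionless Lax representation) and verify two things: that the characteristic variety of the system built from $[V_i,V_j]\equiv 0$ reproduces the rank-two condition on $\{(a_i-a_j)(u_{ij}+p_ip_j)\}$ --- i.e. the covectors $\omega(\lambda),\phi(\lambda)$ annihilating $\D$ are, up to the allowed $GL_2\ltimes\Bbbk^{l-k+1}$ reparametrisations, the defining forms of the scroll in \eqref{scroll}; and that the $\alpha$-subspaces $\{\omega(\lambda)=\phi(\lambda)=0\}$ are tangent to $\alpha$-manifolds, these being the projections to $M$ of the $\lambda$-parametrised family of integral leaves of $\D$. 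Frobenius integrability of $\D$ modulo $\Sigma$ then gives involutivity of the scroll structure for free. The degrees $k=l=m-1$ (for $n=2m$) or $k=l-1$ (for $n=2m+1$) stated after \eqref{ghchar} should drop out of the degree analysis of $\omega(\lambda),\phi(\lambda)$ in $\lambda$.

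\medskip

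\noindent\textbf{Step 3: The moduli count for the truncation.}
Truncating to a fixed dimension $n+1$ means imposing all equations \eqref{gh} simultaneously; by Step~1 the solution space is parametrised by $n-1$ arbitrary functions of $3$ variables (the Cartan character $s_3$, with all higher characters zero). I would confirm this by a characteristic-initial-value argument: an $\alpha$-manifold has dimension $n-1$, and a solution is reconstructed from its restriction data along a generic $3$-dimensional coordinate subspace, giving $n-1$ functions of $3$ variables. The main obstacle I anticipate is Step~1 --- specifically, showing the determinantal (Pfaffian) system \eqref{gh} is genuinely involutive, not merely formally consistent: one must be careful that prolonging the rank-$2$ condition produces no hidden obstructions and that Cartan's test is passed with exactly the claimed characters, since a naive symbol computation on a non-generic locus can mislead. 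Handling this correctly --- e.g. by passing to the covector parametrisation where the system becomes a nicely overdetermined first-order system with an evident involutive symbol --- is where the real work lies; Steps~2 and~3 are then comparatively mechanical, relying on the Lax representation already available in \cite{Bogdanov} and on the $\alpha$-manifold dimension count.
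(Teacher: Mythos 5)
Your Steps 1 and 3 match the paper's strategy: involutivity of \eqref{gh} is obtained from commutativity of the flows of the hierarchy (the paper does not run a Pfaffian/prolongation analysis at all), and the functional count comes from the characteristic variety having affine dimension $3$ and degree $n-1$, citing \cite{BCG, KL}. Step 2 is also in the right spirit: the Lax distribution is the $U_{ijk}$ from \cite{Bogdanov}, its integral leaves project to $\alpha$-manifolds, and the $1$-forms $r,s$ of \eqref{rulingab} annihilate it, so the distribution is dual to the rulings.

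The genuine gap is in your claim that the scroll type ``should drop out of the degree analysis of $\omega(\lambda),\phi(\lambda)$ in $\lambda$.'' For the general heavenly hierarchy this fails: the natural annihilating forms are $r=\sum_i\frac{r_i}{\lambda-a_i}dx^i$ and $s=\sum_i\frac{s_i}{\lambda-a_i}dx^i$, which are rational in $\lambda$ with $n+1$ poles; clearing denominators gives two degree-$n$ polynomial curves that are \emph{not} supported on complementary linear subspaces $\mathbb{P}^k$ and $\mathbb{P}^l$ with $k+l=n-1$, so the surface is not presented in the normal form \eqref{scroll} and one cannot read off $(k,l)$, nor even that it is a scroll, from degrees in $\lambda$. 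The paper explicitly flags this (``contrary to the previous section, it is not straightforward to observe the scroll structure'') and closes the gap by a different route: verify that the surface parametrised by \eqref{rulingab} is smooth, irreducible, nondegenerate of degree $n-1$, invoke del Pezzo's theorem to conclude it is \emph{some} $S_{k,l}$, and then pin down $(k,l)$ by computing the $6$-dimensional projective symmetry algebra of the surface, which is $\mathfrak{sl}_2\oplus\mathfrak{sl}_2$ or $\mathfrak{gl}_2\ltimes\R^2$ according to the parity of $n$ and thereby distinguishes $S_{k,k}$ from $S_{k-1,k}$. Without some such argument your degree count $n-1$ for the characteristic variety --- the input to the functional-freedom count in Step 3 --- is unsupported. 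Relatedly, you locate the ``real work'' in Step 1, whereas the involutivity of the determinantal system is the easy part here; the identification of the characteristic variety as a scroll of the stated type is where the substance of the proof lies.
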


Leaving details of the proof to Section \ref{scrGHE},
let us explain how to obtain the ISS. Scrolls arise as characteristic varieties of the system 
depending on the background solution $u$ of \eqref{gh}. Introducing the 2-form
 $$
\Omega=\sum_{i,j}\left(\frac{1}{\lambda-a_i}-\frac{1}{\lambda-a_j}\right)u_{ij}\, dx^i\wedge dx^j,
 $$
one can represent equations (\ref{gh}) of the hierarchy in compact form as $d\Omega=0$, 
$\Omega\wedge\Omega=0$  \cite{Bogdanov}. The general heavenly hierarchy is also equivalent to involutivity of the 
$(n-1)$-dimensional distribution $\D$  spanned by $\lambda$-dependent vector fields
 $$
U_{ijk}=\left(\frac{1}{\lambda-a_i}-\frac{1}{\lambda-a_j}\right)u_{ij}\, \partial_{k}+\left(\frac{1}{\lambda-a_j}-\frac{1}{\lambda-a_k}\right)u_{jk}\, \partial_{i}+\left(\frac{1}{\lambda-a_k}-\frac{1}{\lambda-a_i}\right)u_{ik}\, \partial_{j},
 $$
which thus constitute a dispersionless Lax representation of the general heavenly hierarchy 
(note that in this particular case, the vector fields spanning $\D$ do not contain $\partial_{\lambda}$ terms). 
For every fixed $\lambda$, the integral manifolds of the above distribution are $\alpha$-manifolds. 
Thus, the corresponding scroll structure is involutive.

Representing $\Omega$ in the form $\Omega=r\wedge s$ with $r=\sum_i\frac{r_i}{\lambda-a_i}\, dx^i$,  
$s=\sum_i\frac{s_i}{\lambda-a_i}\, dx^i$, where $(a_i-a_j)u_{ij}=r_is_j-r_js_i$,
one can parametrise the characteristic variety  (\ref{ghchar}) in the form $p_i dx^i=\alpha\,r+\beta\,s$, explicitly, 
 \begin{equation}\label{rulingab}
p_i=\alpha\, \frac{ r_i}{\lambda-a_i}+\beta\, \frac{ s_i}{\lambda-a_i}.
 \end{equation}
Note that both 1-forms $r$ and $s$ annihilate the vector fields $U_{ijk}$ from the dispersionless Lax representation. Thus, distribution $\D$ is dual to the rulings of the scroll.

\subsection{Main results on general ISS in dimensions five and six}

The key observation of this paper is that involutive scroll structures arise, as characteristic varieties, on solutions  of dispersionless integrable hierarchies in 4D. 

In Section \ref{sec:ex} we discuss further examples of this kind,
in particular, ISS associated with the hierarchy of conformal self-duality equations of \cite{DFK, B2016, B2025}
in dimensions four, five and six. We will consider the general dimension in Section \ref{SIH}
for the second heavenly hierarchy and also revisit the general heavenly equation.

In Sections \ref{sec:alpha-par} and \ref{sec:alpha-par22} we obtain an explicit parametrisation of generic involutive $S_{1,2}$ and $S_{2,2}$ scroll structures via the first few flows of the hierarchy of conformal self-duality equations.
Our main results on these ISS are summarised as follows.

 \begin{theorem}\label{t12} 
A generic involutive $S_{1,2}$ scroll structure on a five-dimensional manifold can be  parametrised as
$\alpha\, \omega(\lambda)+\beta\, \phi(\lambda)$ where $\alpha, \beta$ are arbitrary functions and the one-forms 
$\omega(\lambda)$, $\phi(\lambda)$ are given by
 \begin{equation*}
 \begin{array}{c}
\omega(\lambda)=u_1dx^1+u_2dx^2+\lambda(dx^3-u_4dx^5),\\[5pt]
\phi(\lambda)=v_1dx^1+v_2dx^2+\lambda(dx^4-(v_4+w)dx^5)+\lambda^2dx^5.
 \end{array}
 \end{equation*}
Here $u, v$ and $w$ are functions of $x^1, \dots,  x^5$ that satisfy equations (\ref{alttriple}) of the hierarchy of conformal self-duality equations  coming from the requirement of involutivity of the distribution $\D=\langle Y_1, Y_2, Y_3\rangle$, 
where the vector fields $Y_i$ are given by 
 \begin{equation*}
 \begin{array}{c}
Y_1=\lambda \partial_1-u_{1}\partial_3-v_{1}\partial_4+\lambda w_1\partial_{\lambda},\
Y_2=\lambda \partial_2-u_{2}\partial_3-v_{2}\partial_4+\lambda w_2\partial_{\lambda},\\[5pt]
Y_3=\partial_5+u_{4}\partial_3+(v_{4}+w-\lambda)\partial_4-\lambda w_4 \partial_{\lambda}.
 \end{array}
 \end{equation*}
Generic involutive $S_{1,2}$ scroll structures locally depend on 9 arbitrary functions of 3 variables.

Conversely,  ISS is recovered from differential equations (\ref{alttriple}) as their characteristic variety.
This system of differential equations is integrable with dispersionless Lax representation $\langle Y_i\rangle$.
 \end{theorem}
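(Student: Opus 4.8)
The plan is to prove the two directions of the statement in turn. In the forward direction I would start from a generic involutive $S_{1,2}$ structure, reduce it by gauge and coordinate transformations to the displayed normal form, and translate involutivity into the system \eqref{alttriple}. In the converse direction I would check that \eqref{alttriple}, read as a second-order system for $(u,v,w)$ on $M$, has the prescribed scroll as its characteristic variety and carries $\langle Y_1,Y_2,Y_3\rangle$ as a dispersionless Lax representation.

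For the normal form, begin with a general parametrisation $\alpha\,\omega(\lambda)+\beta\,\phi(\lambda)$ as in \eqref{scroll} with $\deg\omega=1$, $\deg\phi=2$, and exploit the full transformation freedom: the fractional-linear action $\lambda\mapsto\frac{a\lambda+b}{c\lambda+d}$ with $ad-bc=1$, the rescalings by $r$ and $s$, the shift $\phi(\lambda)\to\phi(\lambda)+p_1(\lambda)\,\omega(\lambda)$, and a choice of local coordinates on $M$. The genericity hypothesis is that the $\alpha$-subspaces at two distinguished values of $\lambda$, normalised to $0$ and $\infty$, are in general position (so, in particular, the scroll is not a cone). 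Since the structure is involutive, the codimension-two distributions $D_0$ and $D_\infty$ cut out on $M$ by $\omega(0)=\phi(0)=0$ and by $\omega_1=\phi_2=0$ are integrable; coordinates $x^1,\dots,x^5$ adapted to them normalise $\omega_1=dx^3-u_4\,dx^5$, $\phi_2=dx^5$, and force $\omega_0,\phi_0$ to be combinations of $dx^1,dx^2$. The crucial point is that integrability of $D_\lambda$ for all intermediate $\lambda$, together with these adapted coordinates, forces the remaining coefficient $1$-forms to be exact in the relevant variables: one gets $\omega_0=u_1\,dx^1+u_2\,dx^2$, $\phi_0=v_1\,dx^1+v_2\,dx^2$ with $u_i=\partial_i u$, $v_i=\partial_i v$, and the $dx^5$-entries of $\omega_1$ and $\phi_1$ become controlled by $u$, $v$ and one further function $w$. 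I expect this reduction --- locating the precise genericity conditions and proving the rigidity that collapses the $25$ coefficient functions of $\omega,\phi$ to the three potentials $u,v,w$ --- to be the main obstacle.

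Next I would verify that $\D=\langle Y_1,Y_2,Y_3\rangle$ is the canonical lift of the $\alpha$-distribution to the $\mathbb{P}^1$-bundle $\hat M\to M$: a direct contraction gives $\omega(\lambda)(Y_i^0)=\phi(\lambda)(Y_i^0)=0$, so $Y_1^0,Y_2^0,Y_3^0$ span the $\alpha$-subspace at $\lambda$, and the $\partial_\lambda$-components are the unique ones making $\D$ project onto these subspaces along $\hat M\to M$. By the general principle recalled in Section~\ref{sec:invol}, the scroll structure is involutive precisely when $\D$ is Frobenius integrable; writing out the three conditions $[Y_i,Y_j]\in\D$ and collecting powers of $\lambda$ (after eliminating the $\lambda$-dependent coefficients in $[Y_i,Y_j]=\sum_k f_kY_k$) produces the system \eqref{alttriple} and simultaneously exhibits $\langle Y_i\rangle$ as a dispersionless Lax representation. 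The converse assertion that the characteristic variety of \eqref{alttriple} is the scroll $\alpha\,\omega(\lambda)+\beta\,\phi(\lambda)$ is then proved exactly as in Theorem~\ref{ISSGPBE}: compute the principal symbol of \eqref{alttriple} and check that its zero locus in $\mathbb{P}T^*M$ is dual to the span of $Y_1^0,Y_2^0,Y_3^0$, hence coincides with $S_{1,2}$.

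For the count of free functions I would set up the exterior differential system governing involutive $S_{1,2}$ structures (equivalently, prolong \eqref{alttriple} and inspect its symbol), verify it is involutive in the sense of Cartan, and compute the reduced Cartan characters; the last nonzero one equals $9$, so by the Cartan--K\"ahler theorem a generic such structure depends on $9$ functions of $3$ variables. Alternatively one can run a formal characteristic Cauchy-problem count directly on \eqref{alttriple}. This step is routine once the normal form and \eqref{alttriple} are in hand, although the character computation is the second place where genuine calculation is required.
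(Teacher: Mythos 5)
Your overall strategy coincides with the paper's: normalise the scroll via the $\lambda$-M\"obius, rescaling and shift freedoms together with coordinates adapted to the integrable distributions at $\lambda=0,\infty$; then lift the $\alpha$-distribution to the correspondence space $\hat M$, impose Frobenius integrability, potentiate the coefficients to $u,v,w$, and read off \eqref{alttriple}; finally obtain the functional count from involutivity plus the degree ($9$, for the triple scroll) and affine dimension ($3$) of the characteristic variety, which is exactly what your Cartan-character computation amounts to.

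The one genuine gap sits in the step you yourself flag as the main obstacle. The Frobenius conditions for the lifted distribution do \emph{not} collapse the coefficients all the way down to the three potentials $u,v,w$: after potentiation a residual functional parameter $h=h(x^3,x^4,x^5)$ survives, entering the $\partial_3$- and $\partial_4$-coefficients (as $u_4+hu$ and $v_4+hv+w$) and the $\partial_\lambda$-coefficient of the third Lax field (which acquires a $\lambda^2h$ term, see \eqref{Pi5D}). This $h$ passes every integrability condition, so it cannot be eliminated by the symbol analysis alone; the paper removes it by a further point transformation $\tilde x^4=f(x^3,x^4,x^5)$ with $f_{44}=hf_4$, together with the induced rescalings of $u,v,w,\lambda$ and of $\omega,\phi$ --- residual gauge freedom not used up in the initial normalisation. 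Without this extra step you land on the more general family \eqref{hdef} rather than the canonical form in the theorem, so the reduction to precisely the system \eqref{alttriple} with the displayed $Y_i$ would be incomplete. A smaller imprecision: the $\partial_\lambda$-components of the $Y_i$ are not determined by requiring that $\D$ project onto the $\alpha$-subspaces (any choice does that); they are fixed by the first-order part of the Frobenius conditions. The remaining items --- the converse via the principal symbol and the count of $9$ functions of $3$ variables --- match the paper's argument.
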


 \begin{theorem}\label{t22} 
A generic involutive $S_{2,2}$ scroll structure on a six-dimensional manifold can be  parametrised as
$\alpha\, \omega(\lambda)+\beta\, \phi(\lambda)$ where $\alpha, \beta$ are arbitrary functions and the one-forms 
$\omega(\lambda)$, $\phi(\lambda)$ are given by
 \begin{equation*}
 \begin{array}{c}
\omega(\lambda)=u_1dx^1+u_2dx^2+\lambda(dx^3-u_4dx^5-(u_3+w)dx^6)+\lambda^2dx^6,\\[5pt]
\phi(\lambda)=v_1dx^1+v_2 dx^2+\lambda(dx^4-(v_4+w)dx^5-v_3dx^6)+\lambda^2dx^5.
 \end{array}
 \end{equation*}
Here $u, v$ and $w$ are functions of $x^1, \dots,  x^6$ that satisfy equations (\ref{alttriple})+(\ref{triple2}) of the hierarchy of conformal self-duality equations  coming from the requirement of involutivity of the distribution 
$\D=\langle Y_1, Y_2, Y_3, Y_4\rangle$ where the vector fields $Y_i$ are given by 
 \begin{equation*}
 \begin{array}{c}
Y_1=\lambda \partial_1-u_{1}\partial_3-v_{1}\partial_4+\lambda w_1\partial_{\lambda},\quad
Y_3=\partial_5+u_{4}\partial_3+(v_{4}+w-\lambda)\partial_4-\lambda w_4 \partial_{\lambda},\\[5pt]
Y_2=\lambda \partial_2-u_{2}\partial_3-v_{2}\partial_4+\lambda w_2\partial_{\lambda},\quad
Y_4=\partial_6+(u_3+w-\lambda)\partial_3+v_{3}\partial_4-\lambda w_3 \partial_{\lambda}.
 \end{array}
 \end{equation*}
Generic involutive $S_{2,2}$ scroll structures locally depend on 12 arbitrary functions of 3 variables.

Conversely,  ISS is recovered from equations (\ref{alttriple})+(\ref{triple2}) as their characteristic variety.
This system of differential equations is integrable with dispersionless Lax representation $\langle Y_i\rangle$.
 \end{theorem}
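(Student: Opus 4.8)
\emph{Strategy.} I would prove Theorem \ref{t22} in parallel with Theorem \ref{t12}, in three steps: reduce a generic involutive $S_{2,2}$ structure to the displayed normal form; show that involutivity of the normalised structure is equivalent to the system \eqref{alttriple}+\eqref{triple2}, with $\D=\langle Y_1,\dots,Y_4\rangle$ furnishing the $\alpha$-manifolds; and count the residual functional freedom. The ambient framework is the one of Section \ref{sec:invol}: $\Sigma$ is an involutive second-order system for scalar potentials on $M^6$, $\hat M^7$ is a $\mathbb{P}^1$-bundle over $M^6$ with fibre coordinate $\lambda$, and $\D\subset T\hat M$ is a corank-three (hence rank-four) distribution that is integrable modulo $\Sigma$.

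\emph{Step 1: normal form.} Starting from an arbitrary presentation $\alpha\,\omega(\lambda)+\beta\,\phi(\lambda)$ of the scroll as in \eqref{scroll} with $k=l=2$, I would use the transformation freedom recalled in Section \ref{sec:invol} --- the $\lambda$-M\"obius action, the rescalings $\omega\mapsto r(c\lambda+d)^2\omega$ and $\phi\mapsto s(c\lambda+d)^2\phi$, and, since $k=l$, the additional $PO_{2,2}$ freedom mixing $\omega(\lambda)$ with $\phi(\lambda)$ --- together with a choice of adapted coordinates $x^1,\dots,x^6$, to bring the three coefficient one-forms of $\omega(\lambda)$ (and of $\phi(\lambda)$) to the stated shape: the leading ones to $dx^6$ and $dx^5$, the subleading ones to $dx^3-u_4dx^5-(u_3+w)dx^6$ and $dx^4-(v_4+w)dx^5-v_3dx^6$, and the constant ones into $\langle dx^1,dx^2\rangle$. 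Genericity is essential here: one must isolate the non-degeneracy conditions under which the distributions canonically attached to an involutive $S_{2,2}$ structure can be straightened into coordinate form, and a consistency analysis must then force the surviving coefficients to assemble into partial derivatives $u_i=u_{x^i}$, $v_i=v_{x^i}$ of two potentials, plus one further function $w$. This bookkeeping for the comparatively large automorphism group of $S_{2,2}$ is the step I expect to demand the most care, and it is what ``generic'' refers to.

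\emph{Step 2: involutivity $\Leftrightarrow$ the differential system.} With the normal form fixed, one writes a frame of vector fields annihilating $\omega(\lambda)$ and $\phi(\lambda)$ and lifts it to $\hat M^7$ by adjoining $\partial_\lambda$-components. Unlike the general heavenly case of Theorem \ref{ISSGPBE}, these vertical components are nonzero; pinning them down to $\lambda w_1\partial_\lambda$, $\lambda w_2\partial_\lambda$, $-\lambda w_4\partial_\lambda$, $-\lambda w_3\partial_\lambda$ is done by an undetermined-coefficient argument, demanding that every bracket $[Y_i,Y_j]$ lie in $\D$ identically in $\lambda$. Imposing this for all six pairs and collecting the $\lambda$-coefficients yields exactly \eqref{alttriple}+\eqref{triple2}; since $Y_1,Y_2,Y_3$ coincide with the vector fields of Theorem \ref{t12}, the brackets among them reproduce \eqref{alttriple}, and only the brackets involving the new field $Y_4$ and the coordinate $x^6$ contribute the additional equations \eqref{triple2}. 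Conversely, when \eqref{alttriple}+\eqref{triple2} hold, $\D$ is integrable on every solution, its integral four-folds project to $M$ as $\alpha$-manifolds --- one through every $\alpha$-plane --- so the scroll structure is involutive; and computing the principal symbol of \eqref{alttriple}+\eqref{triple2} returns the equations defining the $S_{2,2}$ scroll $\bigcup_\lambda\mathbb{P}\langle\omega(\lambda),\phi(\lambda)\rangle$, in the spirit of \eqref{ghchar}. This gives the converse and the integrability statement, with Lax representation $\langle Y_i\rangle$ by construction.

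\emph{Step 3: functional count.} To reach ``$12$ functions of $3$ variables'' I would bring \eqref{alttriple}+\eqref{triple2}, viewed as a system for $u,v,w$ on $M^6$, to involution and read off its Cartan characters --- equivalently, use the twistor picture in which $\hat M^7$ fibres over the three-dimensional leaf space of $\D$, where the reconstruction data live --- subtracting the gauge freedom that survives the normalisation of Step 1. The counts $9$ for the $S_{1,2}$ case (Theorem \ref{t12}) and $n-1$ for the truncated general heavenly hierarchy (Theorem \ref{ISSGPBE}, here $n=5$) serve as consistency checks.
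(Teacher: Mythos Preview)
Your three-step strategy matches the paper's proof in Section~\ref{sec:alpha-par22} closely. A few points where the paper's execution differs from, or refines, what you sketch:

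\emph{On Step~1.} The crucial input you leave implicit is that involutivity itself furnishes the adapted coordinates: one first uses the M\"obius freedom so that the distributions $\omega(\lambda)=\phi(\lambda)=0$ are integrable at $\lambda=0$ and $\lambda=\infty$, giving coordinates $x^1,x^2$ (from $\omega_0,\phi_0$) and $x^5,x^6$ (from $\omega_2,\phi_2$); then the $\lambda^5$-terms of the exterior system~\eqref{int} force $\omega_1,\phi_1$ into $\langle dx^3,dx^5,dx^6\rangle$ and $\langle dx^4,dx^5,dx^6\rangle$ respectively. The potentiation $\alpha=u_1,\beta=u_2,\gamma=v_1,\delta=v_2$ and the introduction of $w$ are \emph{not} part of the gauge reduction but consequences of the Frobenius conditions on the lifted distribution, so they belong to your Step~2 rather than Step~1.

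\emph{A hidden complication.} After the initial normalisation, an auxiliary function $h=h(x^3,x^4,x^5,x^6)$ survives (the ratio of the $dx^4$- and $dx^3$-coefficients in $\phi_1,\omega_1$), and a good part of the paper's argument is devoted to showing it can be set to $1$: one extracts from the Frobenius conditions that $hh_{34}=h_3h_4$, then $h_3=0$, and finally absorbs $h$ by a change of $x^4$. Several other residual functions of $(x^3,x^4,x^5,x^6)$ also appear and are killed by shifts $u\mapsto u+\mu$, $v\mapsto v+\nu$, $w\mapsto w+\rho$. Your outline does not anticipate this layer of work.

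\emph{On Step~3.} The paper's count is more elementary than Cartan characters: since the system~\eqref{alttriple}+\eqref{triple2} is involutive and its characteristic variety is the \emph{triple} scroll $S_{2,2}$ (degree $12$, affine dimension $3$), the general solution depends on $12$ functions of $3$ variables. Subtracting the gauge freedom is then handled by the symmetry computation of Section~\ref{sec:sym}, which shows the point-symmetry algebra involves only functions of at most two arguments and hence does not affect the count.
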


Finally, in Section \ref{sec:s}  we discuss symmetry aspects of involutive scroll structures including projective automorphisms of scrolls $S_{k,l}$ (Section \ref{sec:scrollsym}) and point symmetry algebras of the PDE systems governing involutive $S_{1,2}$ and $S_{2,2}$  scroll structures (Section \ref{sec:sym}).

Verifications of our computations, performed in symbolic packages of Maple, 
are available from the arXiv supplement to this paper.

\section{Examples of ISS in low dimensions}
\label{sec:ex}

In this section we provide further examples of involutive scroll structures associated with principal symbols of 
heavenly type hierarchies. We begin by revisiting the self-duality in four dimensions (note that a quadric 
in $\mathbb{P}^3$ is the simplest 
non-trivial scroll $S_{1,1}$, and its involutivity 
is equivalent to half-flatness). 
Then we concentrate on dimensions five and six, which provide entirely novel results. We will return to
the general dimension in Section \ref{SIH}.

\subsection{Two constructions in dimension four}
\label{S4D}

Let us begin with two constructions of half-flat (self-dual or anti-self-dual) metrics in 4D that will be 
needed in what follows. The first one is a recollection of a known result.

\medskip

{\noindent {\bf Construction 1.} \cite{DFK} Any half-flat metric in 4D can be represented in the form 
 \begin{equation}\label{prevSDM}
g=dx^2dx^3-dx^1dx^4-u_1(dx^3)^2+(v_1-u_2)dx^3dx^4+v_2(dx^4)^2,
 \end{equation}
where $x^1, \dots, x^4$ are local coordinates and the functions $u, v$, along with an auxiliary function $w$, 
satisfy a nonlinear system of second-order PDEs
 $$
Q_1(u)=w_1, \quad Q_1(v)=-w_2, \quad Q_1(w)=0,
 $$
where $Q_1$ is the second-order differential operator defined as
 $$
Q_1=\p_2\p_3-\p_1\p_4+u_1\p_2^2+(v_1- u_2)\p_1\p_2-v_2\p_1^2.
 $$
Note that $Q_1$, considered as a symmetric bi-vector, is the  inverse of the metric $g$ given by \eqref{prevSDM}.

The above system possesses a Lax pair in parameter-dependent vector fields, 
 \begin{equation}\label{C1}
 \begin{array}{c}
X_1=\partial_3+v_{1}\partial_1+u_{1}\partial_2-\lambda\partial_1+w_1\partial_{\lambda},\\[5pt]
X_2=\partial_4+v_{2}\partial_1+u_{2}\partial_2-\lambda\partial_2+w_2\partial_{\lambda}.
 \end{array}
 \end{equation}
This Lax pair has first appeared in \cite{BDM}. 

\medskip

{\noindent {\bf Construction 2.} This is apparently new: any half-flat metric in 4D can be represented in the form 
 \begin{equation}\label{newSDM}
g=(u_1dx^1+u_2dx^2)dx^4-(v_1dx^1+v_2dx^2)dx^3,
 \end{equation}
where $x^1, \dots, x^4$ are local coordinates and the functions $u, v$, along with an auxiliary function $w$  
(all different from that of Construction 1), satisfy a nonlinear system of second-order PDEs
 $$
R_1(u)=u_1w_2-u_2w_1, \quad R_1(v)=v_1w_2-v_2w_1, \quad R_1(w)=0,
 $$
where $R_1$ is the second-order differential operator defined as
 $$
R_1=u_1\p_2\p_3+v_1\p_2\p_4-u_2\p_1\p_3-v_2\p_1\p_4.
 $$
Note that $R_1$, considered as a symmetric bi-vector, is the inverse of the metric $g$ given by \eqref{newSDM}.

The above system possesses a Lax pair in parameter-dependent vector fields, 
 \begin{equation}\label{C2}
 \begin{array}{c}
Y_1=\lambda \partial_1-u_{1}\partial_3-v_{1}\partial_4+\lambda w_1\partial_{\lambda},\\[5pt]
Y_2=\lambda \partial_2-u_{2}\partial_3-v_{2}\partial_4+\lambda w_2\partial_{\lambda}.
 \end{array}
 \end{equation}

Construction 2 can be justified as follows: similarly to the proof of Theorem 2 of \cite{DFK}, we rectify two 
decomposable self-dual two forms (squares of pure spinors), namely,  $\Sigma_1=dx^1\wedge dx^2$ and 
$\Sigma_2=dx^3\wedge dx^4$. This means that the foliations $dx^1=0=dx^2$ and $dx^3=0=dx^4$ consist of 
$\alpha$-surfaces, whence $g$ has form \eqref{newSDM} with 
some functions $u_1,u_2,v_1,v_2$ of all four arguments (however not yet partial derivatives). 

The bundle of $\alpha$-planes lifted to the correspondence space has  form (\ref{C2}), where $w_1,w_2$ are again 
some functions of all four arguments. By the Penrose theorem, anti-self-duality of $g$ is equivalent to the involutivity 
of the distribution $\langle Y_1,Y_2\rangle$. The Frobenius condition, split in powers of $\lambda$, constitutes 
six equations, which tell that $u_1,u_2$ are derivatives of a function $u$ by $x^1$ and $x^2$, respectively, 
and similar for $v_1,v_2$ and $w_1,w_2$. Thus, we arrive at  the required form for the metric $g$, 
the Lax pair and the three second-order integrability conditions.

\medskip

Constructions 1, 2 can be viewed as deformations of the Plebansky second and first heavenly equations  \cite{Plebanski}, respectively. We will see that hierarchies of PDE systems  from  Constructions 1 and 2  (we will refer to them as  two equivalent versions of conformal self-duality equations), provide parametrisations of general involutive $S_{1,2}$ and $S_{2,2}$ scroll structures, see Sections \ref{sec:s12}, \ref{sec:s22} and \ref{sec:alpha-par}, \ref{sec:alpha-par22}.

\medskip

\subsection{Modified heavenly hierarchy and $S_{1,2}$ scroll structures}
\label{sec:h} 

The first three equations in the (modified) heavenly hierarchy  are as follows:
  \begin{equation} \label{heav}
u_{15}-u_{13}u_{44}+u_{14}u_{34}=0, \quad 
u_{14}u_{23}-u_{13}u_{24}=1, \quad 
u_{25}-u_{23}u_{44}+u_{24}u_{34}=0;
  \end{equation}
here $u$ is a function on 5-dimensional manifold $M$ with coordinates $x^1, \dots, x^5$. Note that the second equation is the well-known Plebanski's first heavenly equation \cite{Plebanski}, while the first and the third equations are the modified heavenly equations appearing in the classification of symplectic Monge-Amp\'ere equations in 4D \cite{DF}.
The characteristic variety is the intersection of quadrics,
  $$
\begin{array}{c}
p_1p_5-u_{13}p_4^2-u_{44}p_1p_3+u_{14}p_3p_4+u_{34}p_1p_4=0,\\[3pt]
u_{14}p_2p_3+u_{23}p_1p_4-u_{13}p_2p_4-u_{24}p_1p_3=0,\\[3pt]
p_2p_5-u_{23}p_4^2-u_{44}p_2p_3+u_{24}p_3p_4+u_{34}p_2p_4=0,
\end{array}
  $$
which specify a rational normal scroll $S_{1,2}$ parametrised as
$p_i dx^i=\alpha\, \omega(\lambda)+\beta\, \phi(\lambda)$,
where $\alpha, \beta$ are arbitrary functions and $\omega, \phi$ are 1-forms 
depending on auxiliary parameter $\lambda$:
  \begin{equation}\label{modh}
  \begin{array}{c}
\omega(\lambda)=-u_{14}dx^1-u_{24}dx^2+\lambda (dx^3 +u_{44} dx^5), \\[5pt]
\phi(\lambda)=u_{13}dx^1+u_{23}dx^2+\lambda(dx^4-u_{34}dx^5)+\lambda^2 dx^5.
  \end{array}
  \end{equation}
As $\omega(\lambda)$ is linear in $\lambda$, it plays the role of directrix of the scroll. The equations $\omega(\lambda)=\phi(\lambda)=0$ define a 3-dimensional distribution $\D$,
 $$
\D=\langle
\lambda \partial_1+u_{14}\partial_3-u_{13}\partial_4, \ \ 
\lambda \partial_2+u_{24}\partial_3-u_{23}\partial_4, \ \
\partial_5-u_{44}\partial_3+(u_{34}-\lambda)\partial_4 \rangle, 
 $$
which constitutes dispersionless Lax representation for  equations (\ref{heav}), that is, $\D$ is integrable modulo  (\ref{heav}) for every $\lambda$. 
Since system (\ref{heav}) is in involution and its characteristic variety (scroll $S_{1,2}$) has degree 3 and affine dimension 3,   its general solution  depends on 3 arbitrary functions of 3 variables.

\subsection{Hierarchy of conformal self-duality equations and $S_{1,2}$ scroll structures}
\label{sec:s12}

We will utilise two constructions of the hierarchy of conformal self-duality equations. 
\medskip

\subsubsection{Construction 1}
\label{s12c1}

The  first three Lax equations of this hierarchy  are given by the vector fields
  \begin{equation}\label{Laxh}
  \begin{array}{c}
X_1=\partial_3+v_{1}\partial_1+u_{1}\partial_2-\lambda\partial_1+w_1\partial_{\lambda},\\[3pt]
X_2=\partial_4+v_{2}\partial_1+u_{2}\partial_2-\lambda\partial_2+w_2\partial_{\lambda},\\[3pt]
X_3=\partial_5+v_{3}\partial_1+u_{3}\partial_2-w\partial_1-\lambda\partial_3+w_3\partial_{\lambda};
  \end{array}
  \end{equation}
here $u, v$ and $w$ depend on $x^1, \dots,  x^5$.  Note that the first two vector fields  coincide with (\ref{C1}). 
This hierarchy  was constructed in (\cite{B2016},  eqs. 15-16 and  top of page 6; see also \cite{B2025}, eqs. 38-39). 
Introducing the second-order operators $Q_1,Q_2,Q_3$ as
  \begin{equation}\label{Q}
  \begin{array}{c}
Q_1=\partial_2\partial_3-\partial_1\partial_4+u_1\partial_{2}^2+(v_1- u_2)\partial_{1}\partial_{2}- v_2\partial_{1}^2,\\[3pt]
Q_2=\partial_3^2-\partial_1\partial_5+u_1\partial_2\partial_3+v_1\partial_1\partial_3-u_3\partial_1\partial_2+(w-v_3)\partial_1^2, \\[3pt]
Q_3=\partial_3\partial_4-\partial_2\partial_5+u_2\partial_2\partial_3-u_3\partial_2^2+v_2\partial_1\partial_3+(w-v_3)\partial_1\partial_2,
  \end{array} 
  \end{equation}
one can represent the commutativity conditions $[X_1, X_2]=0$, $[X_1, X_3]=0$ and $[X_2, X_3]=0$ 
in compact form as
  \begin{equation}\label{triple}
  \begin{array}{c}
Q_1(u)=w_1, \quad Q_1(v)=-w_2, \quad Q_1(w)=0, \\[3pt]
Q_2(u)=-u_1w_1, \quad Q_2(v)=u_1w_2,\quad Q_2(w)=-w_1^2, \\[3pt]
Q_3(u)=-w_3-u_1w_2, \quad Q_3(v)=w_4+v_2w_1-(v_1-u_2)w_2, \quad Q_3(w)=-w_1w_2,
  \end{array}
  \end{equation}
respectively. Note that the first three of these equations coincide with the 4D conformal self-duality equations 
of \cite{DFK}. The characteristic variety of the full system (\ref{triple}) is the triple scroll $S_{1,2}$, where 
the scroll $S_{1,2}$  is defined by the equations
  \begin{equation}\label{p12}
  \begin{array}{c}
p_2p_3-p_1p_4+u_1p_2^2+(v_1-u_2)p_1p_2-v_2p_1^2=0,\\[3pt]
p_3^2-p_1p_5+u_1p_2p_3+v_1p_1p_3-u_3p_1p_2+(w-v_3)p_1^2=0,\\[3pt]
p_3p_4-p_2p_5+u_2p_2p_3-u_3p_2^2+v_2p_1p_3+(w-v_3)p_1p_2=0.
  \end{array}
  \end{equation}
It can be parametrised as
$p_i dx^i=\alpha\, \omega(\lambda)+\beta\, \phi(\lambda)$
where $\alpha, \beta$ are arbitrary functions and $\omega(\lambda), \phi(\lambda)$ are 1-forms depending on auxiliary parameter $\lambda$:
  \begin{equation}\label{omega12}
  \begin{array}{c}
\omega(\lambda)=dx^2-u_1dx^3-u_2dx^4-u_3dx^5+\lambda (dx^4-u_1dx^5),\\[5pt]
\phi(\lambda)=dx^1-v_1dx^3-v_2dx^4-(v_3-w)dx^5+\lambda (dx^3-v_1 dx^5)+\lambda^2 dx^5.
  \end{array}
  \end{equation}
As $\omega(\lambda)$ is linear in $\lambda$, it plays the role of directrix of the scroll. Since system (\ref{triple}) is in involution and its characteristic variety (triple scroll $S_{1,2}$) has degree   9 and affine dimension 3, its general solution  depends on 9 arbitrary functions of 3 variables.
The Lax distribution $\D=\langle X_1, X_2, X_3\rangle$ satisfies the equations $\omega(\lambda)=\phi(\lambda)=0$.

The hierarchy of 4D conformal self-duality equations possesses reductions to two known
hierarchies that we discuss in more detail below.

\medskip

\noindent{\bf  (a) Reduction to the hierarchy of Dunajski equations}. The hierarchy of conformal self-duality equations is compatible with the constraint $u_2+v_1=0$. Setting $u=-\theta_1, v=\theta_2$, equations (\ref{triple}) simplify to
  \begin{equation}\label{Dun}
  \begin{array}{c}
\theta_{14}-\theta_{23}+\theta_{11}\theta_{22}-\theta_{12}^2=w, \\[3pt]
 w_{14}-w_{23}+\theta_{11}w_{22}-2\theta_{12}w_{12}+\theta_{22}w_{11}=0,\\[6pt]
\theta_{15}-\theta_{33}+\theta_{11}\theta_{23}-\theta_{12}\theta_{13}=w\theta_{11}, \\[3pt]
 w_{15}-w_{33}+\theta_{11}w_{23}+\theta_{23}w_{11}-\theta_{12}w_{13}-\theta_{13}w_{12}
 =ww_{11}+w_1^2,\\[6pt]
\theta_{25}-\theta_{34}+\theta_{12}\theta_{23}-\theta_{22}\theta_{13}=w\theta_{12}-\rho, \\[3pt]
 w_{25}-w_{34}+\theta_{12}w_{23}+\theta_{23}w_{12}-\theta_{22}w_{13}-\theta_{13}w_{22}=ww_{12}+w_1w_2.
  \end{array}
  \end{equation}
Here $\rho$  is a non-local variable defined via the equations
  \begin{equation}\label{rho}
  \begin{array}{c}
\rho_1=w_3+\theta_{12}w_1-\theta_{11}w_2, \\[3pt]
\rho_2=w_4+\theta_{22}w_1-\theta_{12}w_2, \\[3pt]
\rho_3 = w_5 + (\theta_{23}-w)w_1 - \theta_{13}w_2,
  \end{array}
  \end{equation}
that are compatible modulo $(\ref{Dun})$. 
The first two equations (\ref{Dun}) were first introduced in \cite{DP} in the description of self-dual metrics with a parallel real spinor. The corresponding hierarchy was constructed in \cite{BDM}. The  characteristic variety of system (\ref{Dun})+(\ref{rho}) is a double scroll $S_{1,2}$ (modulo components of lower dimension), where $S_{1,2}$ is defined by equations (\ref{p12}) with the substitution $u=-\theta_1, v=\theta_2$.
Under the same substitution, the scroll $S_{1,2}$ is parametrised  via (\ref{omega12}). Since system (\ref{Dun})+(\ref{rho}) is in involution and its characteristic variety (double scroll $S_{1,2}$) has degree 6 and affine dimension 3,   its general solution  depends on 6 arbitrary functions of 3 variables.

\medskip

\noindent{\bf (b) Reduction to the hierarchy of the second heavenly equation.} There exists further reduction to the second heavenly hierarchy via $u_2+v_1=w=0$. Setting $u=-\theta_1, v=\theta_2$, one can rewrite the first three dispersionless Lax equations in the form
  $$
\begin{array}{c}
X_1=\partial_3+\theta_{12}\partial_1-\theta_{11}\partial_2-\lambda\partial_1,\\[3pt]
X_2=\partial_4+\theta_{22}\partial_1-\theta_{12}\partial_2-\lambda\partial_2,\\[3pt]
X_3=\partial_5+\theta_{23}\partial_1-\theta_{13}\partial_2-\lambda\partial_3;
\end{array}
  $$
their commutativity conditions result in the first three equations of the second heavenly hierarchy,
  \begin{equation}\label{2heav}
  \begin{array}{c}
\theta_{14}-\theta_{23}+\theta_{11}\theta_{22}-\theta_{12}^2=0, \\[3pt]
\theta_{15}-\theta_{33}+\theta_{11}\theta_{23}-\theta_{12}\theta_{13}=0, \\[3pt]
\theta_{25}-\theta_{34}+\theta_{12}\theta_{23}-\theta_{22}\theta_{13}=0.
  \end{array}
  \end{equation}
The  corresponding characteristic variety is a scroll $S_{1,2}$ parametrised as
$p_i dx^i=\alpha\, \omega(\lambda)+\beta\, \phi(\lambda)$ where $\alpha,\beta$ are arbitrary functions 
and $\omega, \phi$ are 1-forms depending on auxiliary parameter $\lambda$:
  $$
\begin{array}{c}
\omega(\lambda)=dx^2+\theta_{11}dx^3+\theta_{12}dx^4+\theta_{13}dx^5+\lambda(dx^4+\theta_{11}dx^5),
\\[5pt]
\phi(\lambda)=dx^1-\theta_{12}dx^3-\theta_{22} dx^4-\theta_{23}dx^5+\lambda(dx^3-\theta_{12}dx^5)+\lambda^2dx^5.
\end{array}
  $$
Since system (\ref{2heav}) is in involution and its characteristic variety (scroll $S_{1,2}$) has degree 3 and affine dimension 3,   its general solution  depends on 3 arbitrary functions of 3 variables.

\subsubsection{Construction 2}
\label{s12c2}

The first three Lax equations of an alternative form of the  hierarchy of conformal self-duality equations are given by the vector fields
  \begin{equation}\label{altLaxh}
  \begin{array}{c}
Y_1=\lambda \partial_1-u_{1}\partial_3-v_{1}\partial_4+\lambda w_1\partial_{\lambda},\\[3pt]
Y_2=\lambda \partial_2-u_{2}\partial_3-v_{2}\partial_4+\lambda w_2\partial_{\lambda},\\[3pt]
Y_3=\partial_5+u_{4}\partial_3+(v_{4}+w-\lambda)\partial_4-\lambda w_4 \partial_{\lambda};
  \end{array}
  \end{equation}
here $u, v$ and $w$ depend on $x^1, \dots,  x^5$. Note that the first two vector fields  coincide with (\ref{C2}). 
Introducing the second-order operators $R_1,R_2,R_3$ as
  \begin{equation}\label{altQ}
  \begin{array}{c}
R_1=u_1\partial_2\partial_3+v_1\partial_2\partial_4-u_2\partial_{1}\partial_{3}  - v_2\partial_{1}\partial_4,\\[3pt]
R_2=u_1\partial_3\partial_4+v_1\partial_4^2-\partial_1\partial_5-u_4\partial_1\partial_3-(v_4+w)\partial_1\partial_4, \\[3pt]
R_3=u_2\partial_3\partial_4+v_2\partial_4^2-\partial_2\partial_5-u_4\partial_2\partial_3-(v_4+w)\partial_2\partial_4,
  \end{array}
  \end{equation}
one can represent the involutivity conditions of the Lax distribution $\D=\langle Y_1, Y_2, Y_3\rangle$  
in compact form as
  \begin{equation}\label{alttriple}
  \begin{array}{c}
R_1(u)=u_1w_2-u_2w_1, \quad R_1(v)=v_1w_2-v_2w_1, \quad R_1(w)=0,\\[3pt]
R_2(u)=u_1w_4, \quad R_2(v)=-u_1w_3,\quad R_2(w)=w_1w_4, \\[3pt]
R_3(u)=u_2w_4, \quad R_3(v)=-u_2w_3, \quad R_3(w)=w_2w_4.
  \end{array}
  \end{equation}
The characteristic variety of system (\ref{alttriple}) is the triple scroll $S_{1,2}$, where the scroll $S_{1,2}$  
is defined by the equations
  \begin{equation}\label{altp12}
  \begin{array}{c}
u_1p_2p_3+v_1p_2p_4-u_2p_1p_3-v_2p_1p_4=0,\\[3pt]
u_1p_3p_4+v_1p_4^2-p_1p_5-u_4p_1p_3-(v_4+w)p_1p_4=0,\\[3pt]
u_2p_3p_4+v_2p_4^2-p_2p_5-u_4p_2p_3-(v_4+w)p_2p_4=0.
  \end{array}
  \end{equation}
It can be parametrised as
$p_i dx^i=\alpha\, \omega(\lambda)+\beta\, \phi(\lambda)$ where 
  \begin{equation}\label{altomega12}
  \begin{array}{c}
\omega(\lambda)=u_1dx^1+u_2dx^2+\lambda(dx^3-u_4dx^5),\\[5pt]
\phi(\lambda)=v_1dx^1+v_2dx^2+\lambda(dx^4-(v_4+w)dx^5)+\lambda^2dx^5.
  \end{array}
  \end{equation}
As $\omega(\lambda)$ is linear in $\lambda$, it plays the role of directrix of the scroll. Since system (\ref{alttriple}) is in involution and its characteristic variety (triple scroll $S_{1,2}$) has degree   9 and affine dimension 3, its general solution  depends on 9 arbitrary functions of 3 variables, the same as in Section \ref{s12c1}. 
The Lax distribution $\D$ satisfies the equations $\omega(\lambda)=\phi(\lambda)=0$.

It will be shown in Section \ref{sec:alpha-par} that system (\ref{alttriple}) gives a local parametrisation of {\it all} involutive $S_{1,2}$ scroll structures. A simple parameter count (9 arbitrary functions of 3 variables) suggests that this parametrisation must be equivalent to that obtained in Section \ref{s12c1}, however, direct transformation establishing equivalence of systems (\ref{triple}) and (\ref{alttriple}) is a nonlocal (B\"acklund-type) transformation.

\subsection{Hierarchy of conformal self-duality equations and $S_{2,2}$ scroll structures}
\label{sec:s22}

Again, we will utilise both constructions of the hierarchy of conformal self-duality equations. 

\subsubsection{Construction 1}
\label{s22c1}

The first four Lax equations of the hierarchy of 4D conformal self-duality equations  are given by the vector fields
  \begin{equation}\label{Laxh1}
  \begin{array}{c}
X_1=\partial_3+v_{1}\partial_1+u_{1}\partial_2-\lambda\partial_1+w_1\partial_{\lambda},\\[3pt]
X_2=\partial_4+v_{2}\partial_1+u_{2}\partial_2-\lambda\partial_2+w_2\partial_{\lambda},\\[3pt]
X_3=\partial_5+v_{3}\partial_1+u_{3}\partial_2-w\partial_1-\lambda\partial_3+w_3\partial_{\lambda},\\[3pt]
X_4=\partial_6+v_{4}\partial_1+u_{4}\partial_2-w\partial_2-\lambda\partial_4+w_4\partial_{\lambda},
  \end{array}
  \end{equation}
see \cite{B2016, B2025}. Introducing, in addition to the operators $Q_1, Q_2, Q_3$ given by (\ref{Q}), the  second-order  operators $Q_4,Q_5,Q_6$ as
  $$
\begin{array}{c}
Q_4=\partial_3\partial_4-\partial_1\partial_6+u_1\partial_{2}\partial_4+v_1\partial_{1}\partial_{4}+(w-u_4)\partial_1\partial_2  - v_4\partial_{1}^2,\\[3pt]
Q_5=\partial_4^2-\partial_2\partial_6+u_2\partial_2\partial_4+v_2\partial_1\partial_4+(w-u_4)\partial_2^2-v_4\partial_1\partial_2, \\[3pt]
Q_6=\partial_4\partial_5-\partial_3\partial_6+(v_3-w)\partial_1\partial_4+(w-u_4)\partial_2\partial_3+u_3\partial_2\partial_4-v_4\partial_1\partial_3,
\end{array}
  $$
one can represent the commutativity conditions $[X_1, X_4]=0$, $[X_2, X_4]=0$ and $[X_3, X_4]=0$ 
in compact form as
  \begin{equation}\label{triple1}
  \begin{array}{c}
Q_4(u)=w_3+(v_1-u_2)w_1+u_1w_2, \quad Q_4(v)=-w_4-v_2w_1, \quad Q_4(w)=-w_1w_2,\\[6pt]
Q_5(u)=v_2w_1, \quad Q_5(v)=-v_2w_2,\quad Q_5(w)=-w_2^2, \\[6pt]
Q_6(u)=w_5+(v_3-w)w_1+u_1w_4-u_2w_3+u_3w_2, \\[3pt]
Q_6(v)=-w_6+v_1w_4-v_2w_3+(w-u_4)w_2-v_4w_1, \quad
Q_6(w)=w_1w_4-w_2w_3,
  \end{array}
  \end{equation}
respectively. Equations (\ref{triple1}) must be considered together with equations (\ref{triple}) that come from the commutativity conditions of the first three vector fields, $X_1, X_2, X_3$. The characteristic variety of the full system (\ref{triple})+(\ref{triple1}) is the triple scroll $S_{2,2}$, where the scroll $S_{2,2}$  is defined by the equations
  \begin{equation}\label{p22}
  \begin{array}{c}
p_2p_3-p_1p_4+u_1p_2^2+(v_1-u_2)p_1p_2-v_2p_1^2=0, \\[3pt]
p_3^2-p_1p_5+u_1p_2p_3+v_1p_1p_3-u_3p_1p_2+(w-v_3)p_1^2=0, \\[3pt]
p_3p_4-p_2p_5+u_2p_2p_3-u_3p_2^2+v_2p_1p_3+(w-v_3)p_1p_2=0, \\[3pt]
p_3p_4-p_1p_6+u_1p_2p_4+v_1p_1p_4+(w-u_4)p_1p_2  - v_4p_1^2=0, \\[3pt]
p_4^2-p_2p_6+u_2p_2p_4+v_2p_1p_4+(w-u_4)p_2^2-v_4p_1p_2=0,  \\[3pt]
p_4p_5-p_3p_6+(v_3-w)p_1p_4+(w-u_4)p_2p_3+u_3p_2p_4-v_4p_1p_3=0.
  \end{array}
  \end{equation}
It can be parametrised as
$p_i dx^i=\alpha\, \omega(\lambda)+\beta\, \phi(\lambda)$
where $\alpha, \beta$ are arbitrary functions and $\omega(\lambda), \phi(\lambda)$ are 1-forms depending on auxiliary parameter $\lambda$:
  \begin{equation}\label{omega221}
  \begin{array}{rl}
\hskip2pt \omega(\lambda) &\!\!\!\!= dx^1-v_2dx^4+(w-v_3)dx^5-(v_1v_2+v_4)dx^6 
+\lambda(dx^3+v_1dx^5-v_2dx^6)+2\lambda^2dx^5,\hskip-12pt \\[6pt]
\hskip2pt \phi(\lambda) &\!\!\!\!= dx^2-u_1 dx^3+(v_1-u_2)dx^4-(u_1v_1+u_3)dx^5+(v_1^2-u_2v_1-u_4+w)dx^6
\hskip-12pt \\[3pt]
&\ \ {}+ \lambda(dx^4-u_1dx^5+(2v_1-u_2)dx^6)+2\lambda^2dx^6.
  \end{array}
  \end{equation}
Note that both $\omega(\lambda)$ and $\phi(\lambda)$ are quadratic in $\lambda$. Since system (\ref{triple})+(\ref{triple1}) is in involution and its characteristic variety (triple scroll $S_{2,2}$) has degree 12 and affine dimension 3, its general solution  depends on 12 arbitrary functions of 3 variables.
The Lax distribution $\D=\langle X_1, X_2, X_3, X_4\rangle$ satisfies the equations $\omega(\lambda)=\phi(\lambda)=0$.

The above equations possess analogous reductions to the Dunajski hierarchy ($u_2+v_1=0$) and the second heavenly hierarchy ($u_2+v_1=w=0$).


\subsubsection{Construction 2}
\label{s22c2}

The first four Lax equations of an alternative form of the  hierarchy of conformal self-duality equations are given by the vector fields
  \begin{equation}\label{altLaxh22}
  \begin{array}{c}
Y_1=\lambda \partial_1-u_{1}\partial_3-v_{1}\partial_4+\lambda w_1\partial_{\lambda}, \\[3pt]
Y_2=\lambda \partial_2-u_{2}\partial_3-v_{2}\partial_4+\lambda w_2\partial_{\lambda}, \\[3pt]
Y_3=\partial_5+u_{4}\partial_3+(v_{4}+w-\lambda)\partial_4-\lambda w_4 \partial_{\lambda}, \\[3pt]
Y_4=\partial_6+(u_3+w-\lambda)\partial_3+v_{3}\partial_4-\lambda w_3 \partial_{\lambda}.
  \end{array}
  \end{equation}
Introducing, in addition to the operators $R_1, R_2, R_3$ given by (\ref{altQ}), the  second-order  operators 
$R_4, R_5, R_6$ as
  $$
\begin{array}{c}
R_4=v_1\partial_3\partial_4+u_1\partial_3^2-\partial_1\partial_6-v_3\partial_1\partial_4-(u_3+w)\partial_1\partial_3, \\[3pt]
R_5=v_2\partial_3\partial_4+u_2\partial_3^2-\partial_2\partial_6-v_3\partial_2\partial_4-(u_3+w)\partial_2\partial_3, \\[3pt]
R_6=\partial_3\partial_5-\partial_4\partial_6+u_4\partial_3^2-v_3\partial_4^2+(v_4-u_3)\partial_3\partial_4,
\end{array}
  $$
one can represent the involutivity conditions $[Y_1, Y_4]\in\D$, $[Y_2, Y_4]\in\D$ and 
$[Y_3, Y_4]\in\D$ in compact form as
  \begin{equation}\label{triple2}
  \begin{array}{c}
R_4(u)=-v_1w_4, \quad R_4(v)=v_1w_3, \quad R_4(w)=w_1w_3, \\[5pt]
R_5(u)=-v_2w_4, \quad R_5(v)=v_2w_3,\quad R_5(w)=w_2w_3, \\[5pt]
R_6(u)=-w_5-(v_4+w)w_4-u_4w_3, \quad R_6(v)=w_6+(u_3+w)w_3+v_3w_4, \quad R_6(w)=0,
  \end{array}
  \end{equation}
respectively. Equations (\ref{triple2}) must be considered together with equations (\ref{alttriple}) that come from the involutivity conditions of the first three vector fields, $Y_1, Y_2, Y_3$. The characteristic variety of the full system (\ref{alttriple})+(\ref{triple2}) is the triple scroll $S_{2,2}$, where the scroll $S_{2,2}$  is defined by the equations
  \begin{equation}\label{p22}
  \begin{array}{c}
u_2p_2p_3+v_1p_2p_4-u_2p_1p_3-v_2p_1p_4=0,\\[3pt]
u_1p_3p_4+v_1p_4^2-p_1p_5-u_4p_1p_3-(v_4+w)p_1p_4=0,\\[3pt]
u_2p_3p_4+v_2p_4^2-p_2p_5-u_4p_2p_3-(v_4+w)p_2p_4=0,\\[3pt]
v_1p_3p_4+u_1p_3^2-p_1p_6-v_3p_1p_4-(u_3+w)p_1p_3=0,\\[3pt]
v_2p_3p_4+u_2p_3^2-p_2p_6-v_3p_2p_4-(u_3+w)p_2p_3=0,\\[3pt]
p_3p_5-p_4p_6+u_4p_3^2-v_3p_4^2+(v_4-u_3)p_3p_4=0.
  \end{array}
  \end{equation}
It can be parametrised as
$p_i dx^i=\alpha\, \omega(\lambda)+\beta\, \phi(\lambda)$
where $\alpha, \beta$ are arbitrary functions and $\omega(\lambda), \phi(\lambda)$ are 1-forms depending on auxiliary parameter $\lambda$:
  \begin{equation}\label{omega222}
  \begin{array}{c}
\omega(\lambda)=u_1dx^1+u_2dx^2+\lambda(dx^3-u_4dx^5-(u_3+w)dx^6)+\lambda^2dx^6,\\[5pt]
\phi(\lambda)=v_1dx^1+v_2 dx^2+\lambda(dx^4-(v_4+w)dx^5-v_3dx^6)+\lambda^2dx^5.
  \end{array}
  \end{equation}
The Lax distribution $\D=\langle Y_1, Y_2, Y_3, Y_4\rangle$ satisfies the equations $\omega(\lambda)=\phi(\lambda)=0$.

It will be shown in Section \ref{sec:alpha-par22} that system (\ref{alttriple})+(\ref{triple2}) gives a local parametrisation of {\it all} involutive $S_{2,2}$ scroll structures. A simple parameter count (12 arbitrary functions of 3 variables) suggests that this parametrisation must be equivalent to that obtained in Section \ref{s22c1}, however, direct transformation establishing equivalence of systems (\ref{triple})+(\ref{triple1}) and (\ref{alttriple})+(\ref{triple2})  is a nonlocal (B\"acklund-type) transformation.

\section{General involutive scroll structures}
\label{sec:gen}

In this section we demonstrate that  involutive scroll structures are governed by a
dispersionless integrable system and derive the corresponding Lax representation. 
The general construction is as follows. 
Given an involutive  scroll structure (\ref{scroll}), one looks for foliations of $M^{n+1}$ by $\alpha$-manifolds,
assuming that $\lambda$ is a function on $M$ and imposing the relations
 \begin{equation}\label{inv}
(d\omega(\lambda) +d\lambda\wedge \omega'(\lambda))\wedge \omega(\lambda) \wedge \phi(\lambda)=0, \qquad (d\phi(\lambda)+d\lambda \wedge \phi'(\lambda))\wedge \omega(\lambda) \wedge \phi(\lambda)=0,
 \end{equation}
where we use the notation
$d \omega (\lambda)=d \omega_0+\lambda d \omega_1+ \dots + \lambda^{k}d \omega_{k}, \ \omega'(\lambda)=\frac{d}{d\lambda}\omega(\lambda)$,
etc. Eliminating $d\lambda$-terms from (\ref{inv}), we obtain an exterior differential system
 \begin{equation}\label{int}
 \begin{array}{c}
d \omega(\lambda) \wedge \omega'(\lambda)\wedge \omega(\lambda) \wedge \phi(\lambda)=0,\\[7pt]
d \phi(\lambda) \wedge \phi'(\lambda)\wedge \omega(\lambda) \wedge \phi(\lambda)=0,\\[7pt]
d \omega(\lambda) \wedge \phi'(\lambda)\wedge \omega(\lambda) \wedge \phi(\lambda)+
d \phi(\lambda) \wedge \omega'(\lambda)\wedge \omega(\lambda) \wedge \phi(\lambda)=0,
 \end{array}
 \end{equation}
which is to  be split in powers of $\lambda$. Compatibility conditions of equations (\ref{inv}) for $\lambda$, 
along with equations (\ref{int}), can be viewed as a differential system $\Sigma$ governing involutive scroll structures. 

Relations \eqref{inv} play the role of a dispersionless Lax representation of $\Sigma$. Indeed, 
tangents to $\alpha$-manifolds in $M^{n+1}$ are encoded by spans $\langle V_i^0\rangle$ of $(n-1)$ vectors 
annihilated by $\omega,\phi$ and depending on $\lambda$. Asuming $\lambda=\lambda(x)$ to be a
function on $M$ makes $V^0_i$ into vector fields on $M$, then
the conditions $\omega([V^0_i,V^0_j])=0$, $\phi([V^0_i,V^0_j])=0$ are equivalent to involutivity.

Simulteneously, one can encode $\lambda$-freedom of $V^0_i$ at $x\in M$ as a bundle $\hat{M}^{n+2}\to M^{n+1}$,
with $\mathbb{P}^1(\lambda)$ fiber, called the correspondence space. $V^0_i$ lift to vector fields $V_i$ 
on $\hat{M}$ (via first-order PDEs in the commutativity conditions for $V^0_i$),
generating distribution $\D$ of rank $(n-1)$, which is the dispersionless Lax representation. 
Its Frobenius condition is equivalent to \eqref{inv}. 
 
This interpretation means the characteristic property of the dispersionless Lax reprsentation. For Lax pairs and 
determined PDEs it was demonstrated in full generality in \cite{CK}. Hierarchies consitute overdetermined systems and
the rank of Lax distribution is larger than two, henceforth, we observe this property  aposteriori.

Below, we provide an explicit coordinate form of equations (\ref{inv}), (\ref{int})  in the special case of involutive 
$S_{1,2}$  and $S_{2,2}$  scroll structures. 
Remarkably, the general cases are governed by the hierarchy of conformal self-duality equations.

\subsection{Parametrisation of involutive $S_{1,2}$ scroll structures: proof of Theorem \ref{t12}} 
\label{sec:alpha-par}

The general $S_{1,2}$ scroll structure in 5D has the form
 $$
\omega(\lambda)=\omega_0+\lambda \omega_1, \qquad
\phi (\lambda)=\phi_0+\lambda \phi_1+ \lambda^{2}\phi_{2},
 $$
for some generic 1-forms $\omega_i$, $\phi_j$ on the manifold $M^5$. We now explore the condition of involutivity.
 
Using linear fractional transformation freedom for $\lambda$, we can assume that the distribution $\omega(\lambda)=\phi(\lambda)=0$ is integrable for $\lambda=0$ and $\lambda=\infty$.
Thus, there are coordinates $x^1, x^2$ and $x^3, x^5$ such that $\omega_0=adx^1+bdx^2, \ \phi_0=\tilde a dx^1+\tilde b dx^2$ and $\omega_1=qdx^3+pdx^5, \ \phi_2=\tilde q dx^3+\tilde p dx^5$. Applying the transformation $\phi(\lambda)\to \phi(\lambda)-\lambda \frac{\tilde q}{q}\omega(\lambda)$, one can set $\tilde q=0$. Using the scaling freedom of  $\omega(\lambda)$ and $\phi(\lambda)$, one can set  $q=\tilde p=1$. Thus,
 $$
\omega(\lambda)=(adx^1+bdx^2)+\lambda (dx^3+pdx^5), \qquad
\phi (\lambda)=(\tilde a dx^1+\tilde b dx^2)+\lambda \phi_1+ \lambda^{2}dx^5.
 $$
Equating to zero the $\lambda^4$-term in equation $(\ref{int})_2$ gives 
 $$
d\phi_1\wedge \phi_1\wedge dx^3\wedge dx^5=0,
 $$
which means that the distribution $\phi_1=dx^3=dx^5=0$ is integrable. 
Thus, one can find a coordinate $x^4$ such that
 \begin{equation*}\label{phi}
\phi_1=r dx^4+s dx^3+q dx^5.
 \end{equation*}
Applying the transformation $\phi(\lambda)\to \phi(\lambda)-s\omega(\lambda)$, one can set  $s=0$. 
Finally, using the scaling $\lambda \to r \lambda, \ \omega \to r \omega, \ \phi \to r^2 \phi$, one can set $r=1$.
Ultimately,
 \begin{equation}\label{alt}
 \begin{array}{c}
\omega(\lambda)=(\alpha dx^1+\beta dx^2)+\lambda (dx^3+pdx^5), \\[5pt]
\phi (\lambda)=(\gamma dx^1+\delta dx^2)+\lambda (dx^4+qdx^5)+ \lambda^{2}dx^5,
 \end{array}
 \end{equation}
compare with (\ref{altomega12}).  Here $\alpha, \beta, \gamma, \delta, p, q$ are some functions yet to be specified. 

At this point the remaining conditions (\ref{int}) are not sufficient to proceed, and we turn to
representation (\ref{inv}) or even to a more convenient dispersionless Lax distribution $\D$, which is 
the kernel of $\omega(\lambda),\phi(\lambda)$ lifted to the correspondence space $\hat{M}^6$. The latter is 
the natural $\mathbb{P}^1$ bundle (with fiber coordinate $\lambda$) over $M^5$ with a basis of generators
 \begin{gather*}
V_1=\lambda\p_1 - \a\p_3 - \ga\p_4 + S_1\p_\lambda,\quad
V_2=\lambda\p_2 - \b\p_3 - \d\p_4 + S_2\p_\lambda,\\[5pt]
V_3=\p_5 - p\p_3 - (q+\lambda)\p_4 + S_3\p_\lambda,
 \end{gather*}
where the coefficients $S_1, S_2, S_3$ are some functions on $\hat{M}^6$ to be specified later. The Frobenius integrability conditions for $\D=\langle V_1,V_2,V_3\rangle$ contain the following short equations:
 $$
\a(\ga_2-\d_1)=\ga(\a_2-\b_1),\quad \b(\ga_2-\d_1)=\d(\a_2-\b_1).
 $$
To ensure nondegeneracy of the scroll we assume the determinant $\a\d-\b\ga\neq0$ and hence we can potentiate
$\a=u_1$, $\b=u_2$, $\ga=v_1$, $\d=v_2$ for some functions $u,v$ on $M^5$.
Then another short equation becomes $u_1(p_2+u_{24})=u_2(p_1+u_{14})$, whence
$p=-u_4-f(u, x^3, x^4, x^5)$.

Moreover, the same integrability conditions give the form of the coefficients $S_i$, namely, each of them is
$\lambda$ times a function on $M^5$ except for $S_3$, that can have an additional term with $\lambda^2$.
Substituting this into the dispersionless Lax triple we get refined expressions
 \begin{gather*}
V_1=\lambda\p_1 - u_1\p_3 - v_1\p_4 + \lambda w_a\p_\lambda,\quad
V_2=\lambda\p_2 - u_2\p_3 - v_2\p_4 + \lambda w_b\p_\lambda,\\[5pt]
V_3=\p_5 + (u_4+f)\p_3 - (q+\lambda)\p_4 + (\lambda w_c+\lambda^2h)\p_\lambda,
 \end{gather*}
where the function $f$ is as above, while $q,w_a,w_b,w_c,h$ on $M^5$ are yet to be specified.
The simplest of the Frobenius conditions for $\D$ are as follows:
 $$
h_1=h_2=0,\ f_u=h,\ (w_a)_2=(w_b)_1,
 $$
as well as a few equations on $q$ and $w_c$. 
Thus, $w_a=w_1$, $w_b=w_2$ for a function $w$ on $M^5$, $h=h(x^3,x^4,x^5)$ and $f$ is affine in $u$.

Next, substitution $f=h\,u+m(x^3,x^4,x^5)$, $q=n-hv-v_4-w$ simplifies those few equations above to
$n_1=n_2=0$, so $n=n(x^3,x^4,x^5)$, and 
$w_c=\lambda(h_3u+h_4v-2hw-w_4+r)+\lambda^2h$ for $r=r(x^3,x^4,x^5)$.
The functions $m,n,r$ can be eliminated via a change $u\mapsto u+\mu$, $v\mapsto v+\nu$, $w\mapsto w+\rho$,
where $\mu=\mu(x^3,x^4,x^5)$, $\nu=\nu(x^3,x^4,x^5)$, $\rho=\rho(x^3,x^4,x^5)$ satisfy some simple
first-order differential equations. 

Finally, we get the dispersionless Lax representation
 \begin{equation}\label{Pi5D}
 \begin{array}{c}
V_1=\lambda\p_1 - u_1\p_3 - v_1\p_4 + \lambda w_1\p_\lambda,\quad
V_2=\lambda\p_2 - u_2\p_3 - v_2\p_4 + \lambda w_2\p_\lambda,\\[5pt]
V_3=\p_5 + (u_4+hu)\p_3 + (v_4+hv+w-\lambda)\p_4 + \lambda(h_3u+h_4v-2hw-w_4+\lambda h)\p_\lambda,
 \end{array}
 \end{equation}
in which $h=h(x^3,x^4,x^5)$ is a functional parameter that may manifest an integrable deformation.
And indeed, it is an auxiliary arbitrary function, which enters all compatibility equations, passes the involutivity test 
and governs a deformation of the scroll.
However, this parameter $h$ is removable. To see this let us return to the equation of ISS in terms of differential forms
$\omega(\lambda)$, $\phi(\lambda)$ whose expressions are now specified to
 \begin{equation}\label{hdef}
 \begin{array}{c}
\omega(\lambda)=u_1dx^1+u_2dx^2+\lambda(dx^3-(u_4+hu)dx^5),\\[7pt]
\phi(\lambda)=v_1dx^1+v_2dx^2+\lambda(dx^4-(v_4+hv+w)dx^5)+\lambda^2dx^5.
 \end{array}
 \end{equation}
The parameter $h=h(x^3, x^4, x^5)$ can be eliminated by introducing  $f=f(x^3, x^4, x^5)$ such that 
$f_{44}=hf_4$, followed by a  point transformation
 \begin{gather*}
\tilde x^i= x^i\ (i\neq4),\ \  \tilde x^4= f(x^3, x^4, x^5), \ \
\tilde u= f_4u,\ \  \tilde v= f_4^2v+f_3f_4u,\\ 
\tilde w= f_4w-f_{44}v-f_{34}u+f_5,\ \  \tilde \lambda = f_4\lambda.
 \end{gather*}
In the variables with tildas, the forms $\tilde \omega=f_4\, \omega(\lambda)$ and $\tilde \phi=f_4^2\, \phi(\lambda)+f_3f_4\, \omega(\lambda)$  take canonical form (\ref{altomega12}). 
Once the forms $\omega(\lambda)$ and $\phi(\lambda)$ are in  canonical form (\ref{altomega12}),  conditions (\ref{inv}), (\ref{int}) lead to PDEs (\ref{alttriple}). These in turn have Lax representation 
(\ref{altLaxh}) that is precisely \eqref{Pi5D} with $h=1$.
Once we have aligned the parametrization with that of construction from Section \ref{s12c2}, the remaining claims
follow. This finishes the proof of Theorem \ref{t12}.

\subsection{Parametrisation of involutive $S_{2,2}$ scroll structures: proof of Theorem \ref{t22}} 
\label{sec:alpha-par22}

The general $S_{2,2}$ scroll structure in 6D has the form
  $$
\omega(\lambda)=\omega_0+\lambda \omega_1+\lambda^2 \omega_2, \qquad
\phi (\lambda)=\phi_0+\lambda \phi_1+ \lambda^{2}\phi_{2},
  $$
for some generic 1-forms $\omega_i$, $\phi_j$ on the manifold $M^6$. We now explore the condition of involutivity.

Using linear fractional transformation freedom for $\lambda$, we can assume that the distribution 
$\omega(\lambda)=\phi(\lambda)=0$ is integrable for $\lambda=0$ and $\lambda=\infty$.
Thus, there are coordinates $x^1, x^2$ and $x^5, x^6$ such that 
$\omega_0=adx^1+bdx^2, \ \phi_0=\tilde a dx^1+\tilde b dx^2$ and 
$\omega_2=pdx^5+qdx^6, \ \phi_2=\tilde p dx^5+\tilde q dx^6$. 
Taking linear combinations of $\omega(\lambda)$ and $\phi(\lambda)$, one can set 
$\omega_2=dx^6, \ \phi_2=dx^5$. Thus,
  $$
\omega(\lambda)=\alpha dx^1+\beta dx^2+\lambda \omega_1+\lambda^2dx^6, \qquad
\phi (\lambda)=\gamma dx^1+\delta dx^2+\lambda \phi_1+ \lambda^{2}dx^5.
  $$
At the $\lambda^5$-terms in  (\ref{int}), one gets
  $$
d\omega_1\wedge \omega_1\wedge dx^5\wedge dx^6=0, \quad 
d\phi_1\wedge \phi_1\wedge dx^5\wedge dx^6=0, \quad 
(d\omega_1\wedge \phi_1+d\phi_1\wedge \omega_1)\wedge dx^5\wedge dx^6=0, 
  $$
respectively, implying that the three-dimensional distributions defined by the equations $\omega_1= dx^5= dx^6=0$ and  $\phi_1= dx^5= dx^6=0$, are integrable. Thus,  there exist coordinates $x^3, x^4$ such that 
  $$
\omega_1=pdx^3+\mu dx^5+\nu dx^6, \quad \phi_1=qdx^4+\tau dx^5+\eta dx^6,
  $$
furthermore, $h=q/p$ must be a function of $x^3, x^4, x^5, x^6$ only. By suitable rescalings of 
$\lambda, \omega(\lambda), \phi(\lambda)$, one can set $p=1$. On redefinition of the coefficients, 
one ultimately gets
  \begin{equation}\label{omla6}
\begin{array}{c}
\omega(\lambda)=\alpha dx^1+\beta dx^2+\lambda (dx^3+\mu dx^5+\nu dx^6) +\lambda^2dx^6, \\[5pt]
\phi (\lambda)=\gamma dx^1+\delta dx^2+\lambda (h dx^4+\tau dx^5+\eta dx^6)+ \lambda^{2}dx^5.
\end{array}
  \end{equation}
  
Similar to the proof of Theorem \ref{t12} of the previous section, to proceed one needs the full set of equations, 
given through the dispersionless Lax representation,
which is a distribution $\D$ on $\hat{M}^7$ lifting the kernel of $\omega(\lambda),\phi(\lambda)$
on $M^6$. Generators of $\D$ are
 \begin{gather*}
V_1=\lambda\p_1 - \a\p_3 - h^{-1}\ga\p_4 + S_1\p_\lambda,\quad
V_3=\p_5 - m\p_3 - h^{-1}(p+\lambda)\p_4 + S_3\p_\lambda, \\[5pt]
V_2=\lambda\p_2 - \b\p_3 - h^{-1}\d\p_4 + S_2\p_\lambda,\ \quad 
V_4=\p_6 - (n+\lambda)\p_3 - h^{-1}q\p_4 + S_4\p_\lambda,
 \end{gather*}
where the coefficients $S_1, S_2, S_3, S_4$ are some functions on $\hat{M}^7$ to be specified later. The Frobenius integrability conditions for $\D=\langle V_1,V_2,V_3,V_4\rangle$ contain the following short equations:
 $$
h(\a_2-\b_1)=h(\ga_2-\d_1)=0,\quad h_1=h_2=0,\quad hm_1+a_4=hm_2+b_4=0,
 $$
from which we get $\a=u_1$, $\b=u_2$, $\ga=v_1$, $\d=v_2$ for some functions $u,v$ on $M^6$,
as well as $m=-h^{-1}u_4+f(x^3,x^4,x^5,x^6)$ and $h=h(x^3,x^4,x^5,x^6)$.
Furthermore, we obtain the equations 
 $$ 
\ga(\d_3+p_2)=\d(\ga_3+p_1),\quad 2\d h_3=h(\d_3+p_2),
 $$
from which we first get $p=\tilde{g}(v,x^3,x^4,x^5,x^6)-v_3$ and then
$p=2h^{-1}h_3v-v_3+g(x^3,x^4,x^5,x^6)$.
Note that a change $u\mapsto u+\mu$ and $v\mapsto v+\nu$, where the functions $\mu=\mu(x^3,x^4,x^5,x^6)$ 
and $\nu=\nu(x^3,x^4,x^5,x^6)$ satisfy some simple first-order differential equations,
eliminates $f$ and $g$. 

Moreover, the same integrability conditions give the form of the coefficients $S_i$, namely, 
$S_1$ and $S_2$ are affine in $\lambda$, while $S_3$ and $S_4$ are quadratic in $\lambda$ with zero free term.
Substituting this into the dispersionless Lax quadruple we obtain refined expressions,
 \begin{equation*}
 \begin{array}{l}
V_1=\lambda\p_1 - u_1\p_3 - h^{-1}v_1\p_4 + (w_a\lambda+z_a)\p_\lambda,\\[5pt]
V_2=\lambda\p_2 - u_2\p_3 - h^{-1}v_2\p_4 + (w_b\lambda+z_b)\p_\lambda,\\[5pt] 
V_3=\p_5 + h^{-1}u_4\p_3 - h^{-1}(p+\lambda)\p_4 + (w_c\lambda+z_c\lambda^2)\p_\lambda, \\[5pt]
V_4=\p_6 - (n+\lambda)\p_3 + h^{-1}(v_3-2h^{-1}h_3v)\p_4 + (w_d\lambda+z_d\lambda^2)\p_\lambda,
 \end{array}
 \end{equation*}
where the function $h$ is as above, while the functions $p,n,w_a,w_b,w_c,w_d,z_a,z_b,z_c,z_d$ on $M^6$ 
are yet to be specified.
The Frobenius conditions for $\D$ yield the following particular equations:
 \begin{gather*}
(w_b)_1=(w_a)_2,\ (z_c)_1=(z_c)_2=0,\ (z_d)_1=(z_d)_2=0, \\[2pt]
h (w_c)_1 + (w_a)_4 = h u_1 (z_c)_3 + v_1 (z_c)_4 - 2h w_a z_c ,\
h (w_c)_2 + (w_b)_4 = h u_2 (z_c)_3 + v_2 (z_c)_4 - 2h w_b z_c, \\[2pt]
(w_a)_3 + (w_d)_1 = u_1 (z_d)_3 + h^{-1}v_1 (z_d)_4- 2 w_a z_d,\,
(w_b)_3 + (w_d)_2 = u_2 (z_d)_3 + h^{-1}v_2 (z_d)_4- 2 w_b z_d.
 \end{gather*}
This implies $w_a=w_1$, $w_b=w_2$ for a function $w$ on $M^6$ and
 $$
w_c=-h^{-1}w_4-2z_cw+(z_c)_3u+(z_c)_4h^{-1}v+\tilde{t},\quad
w_d=-w_3-2z_dw+(z_d)_3u+(z_d)_4h^{-1}v+t
 $$ 
for some functions $\tilde{t},t,z_c,z_d$ of $(x^3,x^4,x^5,x^6)$. A shift $w\mapsto w+\rho$ eliminates one of
the functions $\tilde{t},t$, so we set $\tilde{t}=0$ in what follows. 

Further exploration of the integrability conditions gives $u_2z_a-u_1z_b=0$, $v_2z_a-v_1z_b=0$,
and again using that $\a\d-\b\ga\neq0$ we conclude $z_a=z_b=0$. 
Substituting this into the dispersionless Lax representation and recomputing the Frobenius conditions we get 
$z_c=0$, $z_d=-h^{-1}h_3$, as well as the following simple equations:
 \begin{gather*}
(z_d)_4=0,\quad
n_1 + u_{13} + z_d u_1 + w_1 =0,\ \ n_2 + u_{23} + z_d u_2 + w_2 =0,\\[2pt]
p_1 + h^{-1}h_3 u_1 + h^{-1}v_{14} + w_1 =0,\ \ p_2 + h^{-1}h_3 u_2 + h^{-1}v_{24} + w_2 =0.
 \end{gather*}
Whence, $n=-u_3-h^{-1}h_3u-w+r$, $p=-h^{-1}h_3u-h^{-1}v-w+s$, where $r,s$ are functions of 
$(x^3,x^4,x^5,x^6)$. Now the Lax quadruple takes the form
 \begin{equation*}
 \begin{array}{rl}
V_1 = &\!\!\! \lambda\p_1 - u_1\p_3 - h^{-1}v_1\p_4 + w_1\lambda\p_\lambda,\\[5pt]
V_2 = &\!\!\! \lambda\p_2 - u_2\p_3 - h^{-1}v_2\p_4 + w_2\lambda\p_\lambda,\\[5pt] 
V_3 = &\!\!\! \p_5 + h^{-1}u_4\p_3 + \bigl(h^{-1}(h^{-1}v_4+h^{-1}h_3u+w-\lambda)-s\bigr)\p_4 - h^{-1}
 w_4\lambda\p_\lambda, \\[5pt]
V_4 = &\!\!\! \p_6 + (u_3+h^{-1}h_3u+w-\lambda-r)\p_3 + h^{-1}(v_3-2h^{-1}h_3v)\p_4 \\[2pt]
      &\! -\bigl((h^{-1}h_3)_3 u + h^{-1}(h^{-1}h_3)_4 v +w_3 - 2h^{-1}h_3w-t+ h^{-1}h_3\lambda\bigr)\lambda\p_\lambda.
 \end{array}
 \end{equation*}
One of the simplest equations in the Frobenius conditions we find is $hh_{34} = h_3h_4$, and its substitution 
into other equations gives $h_3=0$. Thus, $h=h(x^4,x^5,x^6)$, and a change of $x^4$ in formula
\eqref{omla6} simplifies $h$ (also changing $\tau,\eta$ but this is not essential) so we can assume $h=1$. 
We run again the same line of computations with $h=1$ in the last Lax representation, which gives
$r_4=0$, $s_3=-t$, $t_4=0$. Thus, we can set $r=\p_3\mu(x^3,x^5,x^6)$, 
$s=\p_4\rho(x^4,x^5,x^6)-\nu(x^3,x^5,x^6)$, $t=\p_3\nu(x^3,x^5,x^6)$.
Then the change of dependent variables $u\mapsto u+\mu$, $v\mapsto v+\rho$, $w\mapsto w+\nu$ eliminates
$r,s,t$. 
With these last parameters normalized, our Lax distribution becomes identical to \eqref{altLaxh22} and the remaining 
claims follow from the construction of Section \ref{s22c2}. This finishes the proof of Theorem \ref{t22}.

\section{Involutive scroll structures in arbitrary dimensions}
\label{SIH}

In Section \ref{sec:ex} we discussed truncated hierarchies of heavenly type equations.
Here we consider the full hierarchy for the second Plebansky equation, as well as justify the claims from the
introduction concerning the general heavenly hierarchy. Other hierarchies are expected to lead to 
similar formulae. Note that since heavenly equations are related by a B\"acklund type transformation,
relations between the hierarchies (and  the corresponding ISS) are nonlocal.

\subsection{Scrolls $S_{k-1,k}$ and $S_{k,k}$ for the second heavenly hierarchy}

We are going to extends the results of Section \ref{s12c1}(b) to arbitrary dimension. 
The hierarchy of the second Plebansky equation is
 \begin{equation}\label{2PbHier}
u_{a,b+2}-u_{a,b+2}+u_{1,a}u_{2,b}-u_{1,b}u_{2,a}=0\qquad (0<a<b<\infty).
 \end{equation}
 
 \begin{theorem}\label{ISS2PBE}
System \eqref{2PbHier} is involutive and has involutive scroll structure on every solution.
Truncation of the hierarchy to $n+1$ dimensions yields a family of ISS depending on $n-1$ functions 
of 3 variables.
 \end{theorem}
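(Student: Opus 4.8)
The plan is to follow the same three-part strategy that worked in Section \ref{s12c1}(b) and in the discussion of Theorem \ref{ISSGPBE}, but now uniformly in the dimension. First I would exhibit the dispersionless Lax representation of the full second Plebansky hierarchy \eqref{2PbHier}. Writing $u_a$ for a suitable potential in each flow direction, the natural candidate is the infinite collection of $\lambda$-dependent vector fields
\begin{equation*}
X_a=\partial_{a+2}+u_{2,a}\partial_1-u_{1,a}\partial_2-\lambda\,\partial_a,\qquad a\ge 1,
\end{equation*}
extending the triple $X_1,X_2,X_3$ written out in Section \ref{s12c1}(b). The key computation is that $[X_a,X_b]=0$ for all $a,b$ is equivalent (after collecting powers of $\lambda$) precisely to the hierarchy \eqref{2PbHier} together with the first-order relations that say the various $u_{i,a}$ are consistent partial derivatives of a single function $u$. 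This is a direct but routine calculation; it also shows involutivity of the system $\Sigma$ defined by \eqref{2PbHier} in the sense of formal integrability, since the characteristic ideal is generated by the quadrics obtained from the principal symbols of the $X_a$.

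Second, I would read off the characteristic variety. Annihilating all of $\omega(\lambda),\phi(\lambda)$ against the $X_a$ identifies the two defining one-forms
\begin{equation*}
\omega(\lambda)=dx^2+\sum_{a\ge 1}u_{1,a}\,\lambda^{a-1}dx^{a+2}+\cdots,\qquad
\phi(\lambda)=dx^1-\sum_{a\ge 1}u_{2,a}\,\lambda^{a-1}dx^{a+2}+\cdots,
\end{equation*}
with the truncated hierarchy to $n+1$ dimensions producing polynomials of degrees $k-1$ and $k$ (for $n=2k$) or $k$ and $k$ (for $n=2k+1$), exactly as asserted for the scrolls $S_{k-1,k}$ and $S_{k,k}$. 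Here I would invoke the del Pezzo characterisation recalled in Section \ref{RNS0}: the variety cut out has degree $n-1$, is a smooth irreducible ruled surface spanning $\mathbb{P}^n$, hence is a rational normal scroll; matching the degrees of the two rulings fixes which $S_{k,l}$ it is. Involutivity of the resulting scroll structure is then immediate from the general principle emphasised in Section \ref{sec:invol}: for every fixed $\lambda$ the integral manifolds of the Lax distribution $\langle X_a\rangle$ are $\alpha$-manifolds, so every $\alpha$-subspace is tangent to one.

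Third, for the parameter count I would argue as in the examples: the truncated system to dimension $n+1$ is in involution, its characteristic variety is a scroll of degree $n-1$ and affine dimension $3$, so by the Cartan--K\"ahler count its general solution — and hence the family of ISS so obtained — depends on $n-1$ arbitrary functions of $3$ variables. The main obstacle I anticipate is the bookkeeping in the second step: extracting from the infinite family of commutativity conditions the precise integrability relations that allow all the $u_{i,a}$ to be potentiated to a single $u$, and verifying that this happens uniformly rather than only for small $a$ (the truncation must be a consistent sub-hierarchy, not merely a formal cut-off). Once this is handled, the degree and smoothness claims for the scroll, and the function count, follow by the same arguments already used for $S_{1,2}$ and $S_{2,2}$.
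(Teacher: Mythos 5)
Your proposal is correct and follows essentially the same route as the paper: the Lax fields $X_j=\partial_{j+2}+u_{2,j}\partial_1-u_{1,j}\partial_2-\lambda\partial_j$, the explicit $\omega(\lambda),\phi(\lambda)$ annihilating them, identification of the truncations as $S_{k-1,k}$ or $S_{k,k}$, involutivity via the integral leaves of $\D$, and the Cartan--K\"ahler count from degree $n-1$ and affine dimension $3$. The only superfluous steps are the appeal to del Pezzo (the explicit forms are already in canonical scroll parametrisation, so the scroll type is read off directly; del Pezzo is only needed for the \emph{general} heavenly hierarchy where no such parametrisation is manifest) and the worry about potentiating the $u_{i,a}$, which does not arise since one starts from a single function $u$.
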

  
Involutivity of this system is a straightforward verification (commutativity of flows of the hierarchy). 
Let us work in un-restricted dimension,
encoded as a manifold $M^\infty$ (inductive limit of finite-dimensional manifolds $M^n$).
The correspondence space $\hat{M}^{1+\infty}$ is a $\mathbb{P}^1$ bundle over it, supplied with
a canonical Lax distribution $\D^\infty$ generated by  vector fields
 \begin{equation*}
 \begin{array}{l}
V_1=\p_3+(u_{12}-\lambda)\p_1-u_{11}\p_2,\\[3pt] 
V_2=\p_4+u_{22}\p_1-(u_{12}+\lambda)\p_2,\\[3pt]
V_3=\p_5+(u_{23}+\lambda u_{12}-\lambda^2)\p_1-(u_{13}+\lambda u_{11})\p_2,\\[3pt] 
V_4=\p_6+(u_{24}+\lambda u_{22})\p_1-(u_{14}+\lambda u_{12}+\lambda^2)\p_2,\\[3pt]
V_5=\p_7+(u_{25}+\lambda u_{23}+\lambda^2 u_{12}-\lambda^3)\p_1-(u_{15}+\lambda u_{13}+\lambda^2 u_{11})\p_2,\\[3pt] 
V_6=\p_8+(u_{26}+\lambda u_{24}+\lambda^2 u_{22})\p_1-(u_{16}+\lambda u_{14}+\lambda^2 u_{12}+\lambda^3)\p_2,\\[3pt]
\qquad\dots\qquad\dots\qquad\dots\qquad\dots
 \end{array} 
 \end{equation*}
Note 2-periodicity in the pattern of this extension. Alternatively, $\D^\infty$ 
is generated by vector fields
 $$
X_j=\p_{j+2}+u_{2,j}\p_1-u_{1,j}\p_2-\lambda\p_j, \qquad (0<j<\infty),
 $$
and the Frobenius condition for this distribution is satisfied modulo \eqref{2PbHier}. To be more precise, 
the coefficients of the commutators $[V_a,V_b]$ are third-order differential polynomials that are corollaries of second-order expressions \eqref{2PbHier}. 

The kernel of the pushforward of $\D^\infty$ to $M^\infty$ is annihilated by
 \begin{equation*}
 \begin{array}{l}
\omega(\lambda) = \displaystyle
	\sum_{i=1}^\infty\Bigl(\lambda^{i-1}+\sum_{j=0}^{i-2}u_{1,2i-2j-2}\lambda^j\Bigr) dx^{2i}
	+\sum_{i=1}^\infty\Bigl(\sum_{j=0}^{i-1}u_{1,2i-2j-1}\lambda^j\Bigr) dx^{2i+1}\Bigr),\\[9pt] 
\phi(\lambda) = \displaystyle
	\sum_{i=0}^\infty\Bigl(\lambda^{i}+\sum_{j=0}^{i-1}u_{2,2i-2j-1}\lambda^j\Bigr) dx^{2i+1}
	+\sum_{i=1}^\infty\Bigl(\sum_{j=0}^{i-1}u_{2,2i-2j}\lambda^j\Bigr) dx^{2i+2}\Bigr).
 \end{array} 
 \end{equation*}
In this form, one can easily see that the truncation of $M^\infty$ to $M^{2k+1}(x^1,\dots,x^{2k+1})$
yields the scroll $S_{k-1,k}$, while for $M^{2k+2}(x^1,\dots,x^{2k+2})$ we get $S_{k,k}$. 
These scroll structures are involutive by the construction of the hierarchy ($\a$-manifolds are
integral leafs of the distribution $\D$).
Note that the directrix of $S_{k-1,k}$ is $\mathbb{P}^{k-1}$ spanned by the forms 
 $$
\psi_j= dx^{2j+2}+\sum_{i=j+2}^k u_{1,2i-2j-2} dx^{2i}+\sum_{i+1}^k u_{1,2i-2j-1} dx^{2i+1}\quad
(0\leq j<k),
 $$
(the scrolls $S_{k,k}$ have no directrix.) 
 
Since $M^\infty$ is filtered as $\cdots\subset M^n\subset M^{n+1}\subset\cdots$, we have the
filtration of its tangent bundle $\cdots\subset TM^n\subset TM^{n+1}\subset\cdots$, while the 
cotangent bundle is obtained by the projective limit 
$\cdots\leftarrow T^*M^n\leftarrow T^*M^{n+1}\leftarrow \cdots$.
The projective limit $S_{\infty,\infty}\subset T^*M^\infty$ of the scrolls 
 $$
\cdots\leftarrow S_{k-1,k}\leftarrow S_{k,k}\leftarrow S_{k,k+1}\leftarrow\cdots
 $$ 
should be considered as the characteristic variety of the full unrestricted hierarchy.

Due to involutivity, the functional freedom of solutions $u$ of the truncated system is given by the degree and 
dimension of the corresponding characteristic variety \cite{BCG, KL}. Since the latter is a scroll, the local
functional count is as stated. This finishes the proof of Theorem \ref{ISS2PBE}.

\subsection{Scrolls $S_{k-1,k}$ and $S_{k,k}$ for the general heavenly hierarchy}
\label{scrGHE}

Let us give some more details on the proof of Theorem \ref{ISSGPBE} on ISS for system \eqref{gh}, 
which was sketched in Section \ref{Sghe}.
The involutivity of the system follows from commutativity of the flows of the hierarchy.
Thus, the functional freedom of solutions $u$ of the truncated system is given by the degree and 
dimension of the corresponding characteristic variety \cite{BCG, KL}. 
Yet, contrary to the previous section, it is not straightforward to observe the scroll structure 
for the general heavenly hierarchy. 

Actually, while the $[\a:\b]$ lines $\mathbb{P}^1$ of \eqref{rulingab} give ruling of the surface,
the rational normal curves parametrized by vectors $r$ and $s$ in this formula are not supported on 
non-intersecting projective subspaces as in the definition of scrolls.
However, a straightforward verification shows that the surface given by parametrization \eqref{rulingab} 
is smooth (beware: the parametrization is not smooth on the entire projective surface).
Thus, from del Pezzo's theorem (recalled in Section \ref{RNS0}) we conclude that the surface is indeed a scroll,
that is, one of $S_{k,l}$.

Furthermore, direct computation shows that the symmetry algebra of surfaces \eqref{rulingab} is 6-dimensional, 
with the Lie algebra structure alternating between (semi-simple) and (semi-direct product), according to
the parity of $n$. For $n$ even and odd, they are isomorphic to
 $$
\mathfrak{aff}_2\simeq\mathfrak{gl}_2\ltimes\R^2\quad\text{and}\quad
\mathfrak{so}_{2,2}\simeq\mathfrak{sl}_2\oplus\mathfrak{sl}_2,
 $$
respectively. These symmetry algebras, as noticed in Section \ref{RNS0}, uniquely determine type of the scroll.
Computation of symmetries is discussed in the next Section \ref{sec:scrollsym}.

Thus, showing that the degree of the surface is $n-1$ and using its nondegeneracy (meaning its secants span 
the entire space), we conclude that the variety $\op{Char}(\E)$ for system $\E$ given by \eqref{gh}
is a scroll $S_{k-1,k}$ for $n=2k$ and $S_{k,k}$ for $n=2k+1$, as stated before Theorem \ref{ISSGPBE}.
Since $\op{Char}(\E)$ has degree $n-1$ and (affine) dimension 3, the general solution of system \eqref{gh} 
depends on $n-1$ arbitrary functions of 3 variables \cite{BCG, KL}. This finsihes the proof of Theorem \ref{ISSGPBE}.

\medskip

The proof identifies scrolls indirectly. One could wonder what is the directrix of the scroll $S_{k-1,k}$
(for $S_{k,k}$ it is not defined). Explicit formulae here are involved, so we give the following description.

The scroll $S_{k-1,k}$ is  a ruled surface, i.e.,\ a one-parametric family of lines $L_\lambda$ in 
$\mathbb{P}^{2k}$ where  $\lambda$ is the projective parameter. In the affine version, lines $\mathbb{P}^1$ 
correspond to planes $\Pi^2$. For general heavenly equation they have the form 
$\Pi_\lambda=(A-\lambda)^{-1}\Pi_\infty$, where $\Pi_\infty=\langle r_i, s_i\rangle$ is a fixed plane
and $A=\op{diag}\bigl(a_i\bigr)$ is a linear operator.

Fix $(k+1)$ values $\lambda_1,\dots,\lambda_{k+1}\in\mathbb{P}^1$ and let $\Pi_i=\Pi_{\lambda_i}$ be the 
corresponding 2-planes. By nondegeneracy, they are in general position. 
Let $U_i=\oplus_{j\neq i}\Pi_i$ be $2k$-subspaces  labelled by $i=1,\dots,2k+1$,
and let $\ell_i=U_i\cap\Pi_i$ be the intersection lines. They correspond to points in $\mathbb{P}^{2k}$
and any $k$-tuple of points in general position uniquely determines the subspace $\mathbb{P}^{k-1}$ through them.

 \begin{lemma}
 \label{l1}
Points $[\ell_1],\dots,[\ell_{k+1}]$ all belong to some $\mathbb{P}^{k-1}$.
 \end{lemma}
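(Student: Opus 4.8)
The plan is to identify the $\mathbb{P}^{k-1}$ in question with the directrix of the scroll and then verify that each $[\ell_i]$ lies on it. Recall that the scroll is the family of planes $\Pi_\lambda = (A-\lambda)^{-1}\Pi_\infty$ in the affine picture, equivalently the family of lines $L_\lambda \subset \mathbb{P}^{2k}$. The directrix is the unique rational normal curve of degree $k-1$ contained in the scroll; it spans a $\mathbb{P}^{k-1}$. Write $\Pi_\infty = \langle r, s\rangle$ with $r=(r_i)$, $s=(s_i)$; then a point of $L_\lambda$ is $[\alpha\,(A-\lambda)^{-1}r + \beta\,(A-\lambda)^{-1}s]$, matching \eqref{rulingab}. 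The directrix corresponds to a choice $[\alpha:\beta]=[\alpha(\lambda):\beta(\lambda)]$ making the resulting vector polynomial of the minimal degree $k-1$ in $\lambda$ after clearing the common denominator $\prod(a_i-\lambda)$; that polynomial curve is the directrix.

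First I would set up coordinates: clearing denominators, $L_\lambda$ is spanned by the two vector-valued polynomials $R(\lambda) = \bigl(\prod_{m}(a_m-\lambda)\bigr)(A-\lambda)^{-1}r$ and $S(\lambda)$ defined analogously, each of degree $2k$ in $\lambda$ componentwise (with the $i$-th component $r_i\prod_{m\neq i}(a_m-\lambda)$, etc.). Then $\Pi_i = \Pi_{\lambda_i}$ is spanned by $R(\lambda_i), S(\lambda_i)$. Next I would compute $U_i = \bigoplus_{j\neq i}\Pi_j$: this is the span of $\{R(\lambda_j), S(\lambda_j) : j\neq i\}$, a $2k$-dimensional subspace by general position. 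The line $\ell_i = U_i\cap\Pi_i$ is then the $1$-dimensional space of vectors in $\Pi_i$ satisfying the $2k$ linear equations cutting out $U_i$; equivalently, $\ell_i$ is spanned by the unique (up to scale) vector $v(\lambda_i) := a(\lambda_i)R(\lambda_i) + b(\lambda_i)S(\lambda_i)$ lying in $U_i$. The key algebraic observation is a Vandermonde/interpolation identity: the combination $a(\lambda)R(\lambda)+b(\lambda)S(\lambda)$ that realizes the directrix is exactly the one whose value at each $\lambda_i$ lands in the span of its values at the other $\lambda_j$'s, because the directrix curve has degree $k-1 < k$, so any $k+1$ of its points are projectively dependent and in fact any one is in the span of the other $k$. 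I would make this precise by showing $v(\lambda_i)$ is (a scalar multiple of) the value $D(\lambda_i)$ of the degree-$(k-1)$ directrix polynomial $D(\lambda)$, using that $D(\lambda) = a(\lambda)R(\lambda)+b(\lambda)S(\lambda)$ for suitable polynomials $a,b$ of degree $k$ (this is the content of the directrix having lower degree: the top-degree terms cancel).

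The main obstacle — and the step I expect to require genuine care — is proving that $\ell_i$ is spanned by $D(\lambda_i)$ rather than by some other vector in $\Pi_i$. Concretely one must show that $D(\lambda_i) \in U_i = \langle R(\lambda_j), S(\lambda_j)\rangle_{j\neq i}$, and that no other direction in $\Pi_i$ does so. The first half follows from Lagrange interpolation: $D$ has degree $k-1$, and its restriction to the span of $\{D(\lambda_j)\}_{j\neq i}$ — which are $k$ points on a rational normal curve of degree $k-1$, hence a projective frame for the $\mathbb{P}^{k-1}$ they span — forces $D(\lambda_i)$ into that $\mathbb{P}^{k-1} \subset U_i$. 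The second half (uniqueness, i.e. that $\Pi_i \cap U_i$ is genuinely one-dimensional and not accidentally larger) is exactly the general-position hypothesis, which holds by nondegeneracy of the scroll as established in Section \ref{scrGHE}. Once $[\ell_i] = [D(\lambda_i)]$ for all $i=1,\dots,k+1$, all these points lie on the directrix rational normal curve, hence on the $\mathbb{P}^{k-1}$ it spans, which proves the lemma. A clean way to package the computation is to exploit that $(A-\lambda)^{-1}$ acts diagonally, so everything reduces to scalar partial-fraction identities in $\lambda$, component by component, and the directrix condition becomes the vanishing of the top two coefficients of the numerator polynomial — a pair of linear conditions on $[\alpha:\beta]$ that single out $[a(\lambda):b(\lambda)]$.
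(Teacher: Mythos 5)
Your argument is essentially correct, but it takes a genuinely different route from the paper's. The paper proves the Lemma by pure linear algebra, with no reference to the scroll: it normalises coordinates so that $\Pi_1=\langle e_1,e_2\rangle,\dots,\Pi_k=\langle e_{2k-1},e_{2k}\rangle$ and $\Pi_{k+1}=\langle e_{2k+1},\sum x^ie_i\rangle$, computes each $\ell_i$ explicitly, and observes that $\ell_{k+1}=\langle\sum_{i=1}^{2k}x^ie_i\rangle$ is the sum of the generators of $\ell_1,\dots,\ell_k$ — so the statement holds for \emph{any} $k+1$ planes in general position in a $(2k+1)$-dimensional space. You instead identify $[\ell_i]$ with the directrix point $D(\lambda_i)$ on the ruling $L_{\lambda_i}$: since any $k$ points of a rational normal curve of degree $k-1$ span its ambient $\mathbb{P}^{k-1}$, one gets $\mathbb{P}^{k-1}=\langle D(\lambda_j)\rangle_{j\neq i}\subseteq U_i$, hence $D(\lambda_i)\in U_i\cap\Pi_i=\ell_i$, and one-dimensionality of $\ell_i$ (general position) closes the argument. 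This is valid, since the identification of the surface as $S_{k-1,k}$ (and hence the existence and uniqueness of the directrix) is established earlier in the same section, and it has the bonus of proving the subsequent Proposition (that the directrix lies in this $\mathbb{P}^{k-1}$) at the same time; the price is that your Lemma is no longer the elementary, scroll-independent fact the paper uses. One caution: the digression on constructing the directrix explicitly as $a(\lambda)R(\lambda)+b(\lambda)S(\lambda)$ with "the top two coefficients vanishing" does not work as stated — reducing a degree-$2k$ vector polynomial to an effective degree $k-1$ requires far more than two conditions, and the paper itself remarks that explicit formulae for the directrix are involved. Fortunately that construction is not needed: your abstract interpolation argument only uses the \emph{existence} of the directrix and the general-position property of points on a rational normal curve, so you should drop the explicit degree-counting and rest the proof on those two facts.
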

 
 \begin{proof}
One can always choose coordinates such that 
 $$
\Pi_1=\langle e_1,e_2\rangle, \dots, \Pi_k=\langle e_{2k-1},e_{2k}\rangle,\
\Pi_{k+1}=\Bigl\langle e_{2k+1},\sum_{i=1}^{2k+1}x^ie_i\Bigr\rangle.
 $$
Then $\ell_1=\langle x^1e_1+x^2e_2\rangle$, \dots, $\ell_k=\langle x^{2k-1}e_{2k-1}+x^{2k}e_{2k}\rangle$,
$\ell_{k+1}=\langle \sum_{i=1}^{2k}x^ie_i\rangle$, which all belong to $\mathbb{P}^{k-1}$ generated by $\ell_1, \dots, \ell_k$.
 \end{proof}


 \begin{prop}
The directrix of the scroll $S_{k-1,k}$ lies in the subspace $\mathbb{P}^{k-1}$ from Lemma \ref{l1}.
 \end{prop}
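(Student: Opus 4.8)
The plan is to show that the directrix of $S_{k-1,k}$ --- which by definition is the unique rational normal curve of degree $k-1$ among the two that generate the scroll --- actually lies in the $\mathbb{P}^{k-1}$ produced by Lemma \ref{l1}. The directrix is canonically attached to the scroll (intrinsically characterised, e.g.\ as the locus of singular points of the hyperplane sections that split, or as the $(k-1)$-dimensional ruling curve), so it suffices to exhibit \emph{one} $\mathbb{P}^{k-1}$ containing the directrix and then match it with the span $\langle \ell_1,\dots,\ell_k\rangle$ of Lemma \ref{l1}.

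First I would recall the structure of the ruling. Each plane $\Pi_\lambda = (A-\lambda)^{-1}\Pi_\infty$ carries a distinguished line: the directrix meets $L_\lambda = \mathbb{P}(\Pi_\lambda)$ in exactly one point, call it $[\delta(\lambda)]$, and these points sweep out a rational normal curve of degree $k-1$. Meanwhile the companion curve of degree $k$ gives the second distinguished point $[\epsilon(\lambda)]$ on each ruling line. The key observation is that the intersection line $\ell_i = U_i \cap \Pi_i$, where $U_i = \bigoplus_{j\ne i}\Pi_{\lambda_j}$, is precisely the directrix point on $L_{\lambda_i}$: since $U_i$ is the span of the other $k$ ruling planes, a ruling line $L_{\lambda_i}$ meets it in the point lying on the directrix (this uses that the directrix curve of degree $k-1$ spans only a $\mathbb{P}^{k-1}$, so any $k$ of its points already span that $\mathbb{P}^{k-1}$, and the complementary $\mathbb{P}^k$ carrying the degree-$k$ curve is disjoint from it). So $[\ell_i] = [\delta(\lambda_i)]$ for $i=1,\dots,k+1$.

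Then by Lemma \ref{l1} the points $[\delta(\lambda_1)],\dots,[\delta(\lambda_{k+1})]$ all lie in a common $\mathbb{P}^{k-1}$, which must be the span of the directrix (a rational normal curve of degree $k-1$ in $\mathbb{P}^{k-1}$ spans it, and $k$ of its points are already in general position there, so the $\mathbb{P}^{k-1}$ of Lemma \ref{l1} coincides with the linear span of the directrix). Since the directrix is an irreducible non-degenerate curve in its own linear span, it is entirely contained in that $\mathbb{P}^{k-1}$, proving the claim. To be fully rigorous one would also verify that the directrix is indeed the curve of degree $k-1$ rather than the one of degree $k$ --- this is immediate from $k-1<k$ and the definition quoted in Section \ref{RNS0}.

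\textbf{The main obstacle} I expect is pinning down, cleanly and without coordinates, the identification $[\ell_i] = [\delta(\lambda_i)]$: one must argue that for a rational normal scroll the complementary linear subspaces hosting the two ruling curves interact with the spans $U_i$ exactly so that $U_i\cap\Pi_i$ lands on the lower-degree curve and not on the higher-degree one. The cleanest route is probably a dimension count using the explicit coordinates already set up in the proof of Lemma \ref{l1} --- writing the directrix and companion curves in those coordinates and checking that $\sum_{i=1}^{2k} x^i e_i \in \ell_{k+1}$ is the directrix point --- but one should double-check genericity of $\lambda_1,\dots,\lambda_{k+1}$ so that the planes $\Pi_i$ are in general position and the intersections have the expected dimensions.
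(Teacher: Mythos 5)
Your proof is correct, but it takes a genuinely different route from the paper's. The paper works entirely with the explicit data: setting $V=\sum\ell_i$, it observes that the condition $V\cap\Pi_\lambda\neq 0$ is cut out by $k\times k$ minors of a matrix whose entries are linear in $\lambda$, hence by polynomials of degree $k$ in $\lambda$; since these vanish at the $k+1$ values $\lambda_1,\dots,\lambda_{k+1}$, they vanish identically, so $\mathbb{P}(V)$ meets every ruling line and therefore carries the directrix. You instead invoke the abstract splitting $\C^{2k+1}=D\oplus W$ into the linear spans of the degree-$(k-1)$ and degree-$k$ curves (available because the surface has already been identified as $S_{k-1,k}$ via del Pezzo, even though this splitting is not manifest in the parametrisation \eqref{rulingab}), and show by a clean general-position count that $U_i\cap\Pi_i$ is forced to be the directrix point $\langle\delta(\lambda_i)\rangle$: indeed $U_i=D\oplus\langle\epsilon(\lambda_j):j\neq i\rangle$, and $\epsilon(\lambda_i)$ lies outside the span of the other $k$ points of a rational normal curve of degree $k$. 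Your argument proves slightly more (each $[\ell_i]$ lies \emph{on} the directrix, not merely in its span) and re-derives Lemma \ref{l1} as a byproduct, at the cost of relying on the abstract scroll structure rather than the concrete formula $\Pi_\lambda=(A-\lambda)^{-1}\Pi_\infty$; the paper's degree-counting argument is the one that generalises most readily when only the explicit ruling is in hand. The genericity you flag at the end is exactly the paper's standing nondegeneracy assumption (distinct $\lambda_i$ and planes in general position), so there is no gap.
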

 
 \begin{proof}
Let $V=\sum\ell_i$ be the linear space corresponding to the projective subspace $\mathbb{P}^{k-1}$ in question.
Let $v_1,\dots,v_k$ be its basis and $\alpha_1,\dots,\alpha_{2k-1}$ be a basis of $\Pi_\infty^\perp$. 
Then the condition $ V  \cap\Pi_\lambda\neq0$, which is equivalent to $(A-\lambda)V\cap\Pi_\infty\neq0$, means that the $k\times(2k-1)$ matrix $B$ with entries
$b_{ij}=\alpha_j\bigl((A-\lambda)v_i\bigr)$ has rank $<k$. This is equivalent to the vanishing of all
$k\times k$ minors of $B$, thus given by an ideal generated by polynomials of degree $k$ in $\lambda$.
These polynomials vanish for different $\lambda_1,\dots,\lambda_{k+1}$, and hence vanish identically.
 \end{proof}

Let us discuss another approach to finding  directrix of $S_{k,l}$ for $k<l$. Each ruling $\mathbb{P}^1$ of 
$S_{k,l}\subset\mathbb{P}^n$ must intersect a fixed projective space $\mathbb{P}^k$.
This imposes a system of linear relations on its Pl\"ucker embedding via the Grassmanian $\op{Gr}(k+1,n+1)$
to $\mathbb{P}^{\binom{n+1}{k+1}}$.

Let us turn this into an explicit formula for the directrix in dimension $n+1=5$, i.e.\ for $S_{1,2}$. 
Here the directrix $\mathbb{P}^1$ corresponds to a plane $\langle p,q\rangle$ in the affine space $\R^5$,
and it has Pl\"ucker coordinates $\pi_{ij}=p_iq_j-p_jq_i$. Let us form the $4\times5$ matrix $A$ with rows
$p$, $q$, $r=\bigl(r_i/(\lambda-a_i)\bigr)$ and $s=\bigl(s_i/(\lambda-a_i)\bigr)$.
The condition on $p,q$ is $\op{rank}(A)<4$. To write it down, denote
 $$
\varpi(i,j,k,l)= a_ia_j(r_is_j-r_js_i)(r_ks_l-r_ls_k) +a_ka_l(r_ks_l-r_ls_k)(r_is_j-r_js_i),
 $$
and also for a cyclic permutation $(abcde)$ of $(12345)$ 
 $$
\Omega_a= \mathfrak{S}_{bcd}\bigl(\varpi(b,c,d,e)\bigr),
 $$
where $\mathfrak{S}$ denotes the cyclic summation over indicated indices.
Next, for the set of five indices define
 $$
\Psi_{ij} = \displaystyle(a_i - a_j)\prod_{k\neq i}(a_i-a_k)\prod_{k\neq j}(a_j-a_k).
 $$
Then the directrix is given by the relations for all $i<j$, $k<l$ in the projective space of dimension
$\binom{n+1}2=10$:
 $$
\Omega_i\Omega_j\Psi_{ij}\pi_{ij}=\Omega_k\Omega_l\Psi_{k,l}\pi_{k,l}.
 $$
In other words, $[\dots:\pi_{ij}:\dots]=[\dots:\Omega_i^{-1}\Omega_j^{-1}\Psi_{ij}^{-1}:\dots]\in
\mathbb{P}^{10}$.

\section{Symmetries of scroll structures}
\label{sec:s}

In this section we discuss projective automorphisms of scrolls $S_{k,l}$ (Section \ref{sec:scrollsym}) and point symmetry algebras
of the PDE systems governing involutive scroll structures (Section \ref{sec:sym}).

\subsection{Projective automorphisms of scrolls}
\label{sec:scrollsym}

Rational normal scrolls $S_{k,l}$ in $\mathbb{P}^n$ are ruled surfaces over $\mathbb{P}^1$ given by
an equation of the form $\alpha \,\omega(\lambda)+\beta \,\phi(\lambda)$, where $[\alpha:\beta]\in\mathbb{P}^1$
is a parameter on the ruling, $\omega(\lambda)$ and $\phi(\lambda)$ parametrize rational normal curves in
$\mathbb{P}^k$ and $\mathbb{P}^l$, respectively, $k+l=n-1=\op{deg}(S_{k,l})$, and $\lambda=[\lambda_0:\lambda_1]$
is a parameter on the base $\mathbb{P}^1$. In this section we will work over $\C$.

We recall that del Pezzo's theorem implies that an irreducible ruled surface  of degree $n-1$ spanning $\mathbb{P}^n$
is a scroll $S_{k,l}$, including $k=0$, i.e.\ a cone or the plane $\mathbb{P}^2$. (The only non-ruled exception is the 
Veronese surface, see \cite{Dol}.)

 \begin{prop}
The symmety algebra of $S_{k,l}$ is $\mathfrak{sl}_2\oplus\mathfrak{sl}_2$ for $k=l$ and
$\mathfrak{gl}_2\ltimes\C^{l-k+1}$ for $k<l$ where the semi-direct factor is the irreducible representation $S^{k-l}\C^2$. 
In particular, dimension of the symmety algebra is $l-k+5+\delta_{k,l}$.
 \end{prop}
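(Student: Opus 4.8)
The plan is to identify $\op{Aut}(S_{k,l})$ with the group of projective transformations of $\mathbb{P}^n$ preserving the scroll, and then compute its Lie algebra as a subalgebra of $\mathfrak{sl}_{n+1}$. The starting point is the observation that any automorphism of $S_{k,l}$ as an abstract variety extends to a projective transformation of the ambient $\mathbb{P}^n$: this is because $S_{k,l}$ is projectively normal and the embedding is the one given by the complete linear system $|H|$ of its hyperplane class, so $\op{Aut}(S_{k,l})$ acts on $H^0(\mathcal{O}_{S_{k,l}}(1))\cong\C^{n+1}$ linearly. Thus it suffices to compute the automorphism group of the abstract surface together with the induced action on that space of sections.

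Next I would use the description of $S_{k,l}$ as the projectivisation $\mathbb{P}(\mathcal{E})$ of the rank-two bundle $\mathcal{E}=\mathcal{O}_{\mathbb{P}^1}(k)\oplus\mathcal{O}_{\mathbb{P}^1}(l)$ over $\mathbb{P}^1$, embedded by the tautological system. An automorphism must send fibres of the ruling to fibres (for $k<l$ these are the lines of minimal degree; for $k=l$ one argues that the ruling is still intrinsic, or one allows the extra swap which is already accounted for in $SO_{2,2}$ having two $\mathfrak{sl}_2$ factors), hence it covers an automorphism of the base $\mathbb{P}^1$. So $\op{Aut}(S_{k,l})$ sits in an exact sequence with $\op{PGL}_2$ (Möbius transformations of $\lambda$) on the right and the fibrewise automorphisms — bundle automorphisms of $\mathcal{E}$ up to scalar, i.e.\ $\mathbb{P}\op{Aut}(\mathcal{E})$ — on the left. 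The bundle-automorphism algebra is $H^0(\mathbb{P}^1,\mathcal{E}nd\,\mathcal{E})=H^0(\mathcal{O})\oplus H^0(\mathcal{O})\oplus H^0(\mathcal{O}(k-l))\oplus H^0(\mathcal{O}(l-k))$; for $k<l$ only $H^0(\mathcal{O}(l-k))$ (dimension $l-k+1$) survives among the off-diagonal pieces, giving the unipotent radical $\C^{l-k+1}$, while the two diagonal scalars minus the overall projective scalar leave one torus direction, which together with $\mathfrak{sl}_2$ from the base assembles into $\mathfrak{gl}_2$. Concretely I would read this off the transformation freedom already recorded in Section~\ref{sec:invol}: $\lambda\mapsto\frac{a\lambda+b}{c\lambda+d}$ gives $\mathfrak{sl}_2$, the two scalings $r,s$ give $\C^2$ but one combination acts trivially projectively, and $\phi(\lambda)\mapsto\phi(\lambda)+p_{l-k}(\lambda)\omega(\lambda)$ gives exactly the $(l-k+1)$-dimensional abelian piece; the $\mathfrak{gl}_2$-module structure on it is $S^{l-k}\C^2$ because $p_{l-k}$ is a binary form of degree $l-k$ transforming in the standard way under the Möbius action twisted by the scalings. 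For $k=l$ the off-diagonal pieces are now both $H^0(\mathcal{O})$, so $\mathbb{P}\op{Aut}(\mathcal{E})=\op{PGL}_2$, and combined with the base $\op{PGL}_2$ (which can now be genuinely mixed with the fibre one, since $\mathcal{E}$ is a sum of two copies of the same line bundle) one obtains $\op{PGL}_2\times\op{PGL}_2$, whose Lie algebra is $\mathfrak{sl}_2\oplus\mathfrak{sl}_2$; equivalently this is $PO_{2,2}$ acting on $\mathbb{P}^3$ when $n=3$, and in general on the Segre-type embedding of $\mathbb{P}^1\times\mathbb{P}^1$. Counting dimensions: for $k<l$ we get $3+1+(l-k+1)=l-k+5$, and for $k=l$ we get $3+3=6=l-k+5+1$, matching $l-k+5+\delta_{k,l}$.

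The step I expect to be the main obstacle is making rigorous the claim that every abstract automorphism of $S_{k,l}$ is linear and preserves the ruling, i.e.\ that $\op{Aut}(S_{k,l})$ really equals $\op{Aut}(\mathbb{P}^n, S_{k,l})$ and decomposes via the bundle picture — in particular handling the boundary case $k=l$ where the ruling is not unique and the "extra" $\op{PGL}_2$ appears, and the degenerate cases (cone, $\mathbb{P}^2$) flagged in the del Pezzo remark which need to be explicitly excluded or treated separately. Once that structural identification is in place, the Lie-algebra computation is the routine cohomology count of $\mathcal{E}nd(\mathcal{E})$ sketched above, and the identification of the semidirect factor as the irreducible $\mathfrak{gl}_2$-module $S^{l-k}\C^2$ follows from how binary forms of degree $l-k$ transform. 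I would present the argument by first recalling the $\mathbb{P}(\mathcal{E})$ model and the projective normality, then proving linearity and the fibration-preserving property, then computing the infinitesimal automorphisms via $H^0(\mathbb{P}^1,\mathcal{E}nd\,\mathcal{E})$ and the base $\mathfrak{sl}_2$, and finally reconciling the answer with the explicit transformation freedom of scroll structures listed in Section~\ref{sec:invol}, which also pins down the module structure and the dimension formula $l-k+5+\delta_{k,l}$.
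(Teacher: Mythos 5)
Your proposal is correct, but it takes a genuinely different route from the paper. The paper's proof is a direct coordinate computation: it writes the scroll in the splitting $\C^{n+1}=\C^{k+1}\times\C^{l+1}$ via the explicit parametrisation $(\alpha\lambda_0^k,\dots,\beta\lambda_1^l)$, lists the defining quadrics, plugs in a general linear vector field $X=\sum a_i^jx^i\p_{x^j}+\dots$, differentiates the equations modulo the ideal, and reads off an explicit basis of generators (the nilpotent ones $\sum y^{i+r}\p_{x^i}$, the four spanning $\mathfrak{gl}_2$, and the two extra ones for $k=l$), discarding the center of $\mathfrak{gl}_{n+1}$. You instead work intrinsically: projective normality of the scroll and completeness of the hyperplane system to linearize abstract automorphisms, preservation of the ruling to fiber $\op{Aut}$ over $\op{PGL}_2$, and the cohomology count $H^0(\mathbb{P}^1,\mathcal{E}nd(\mathcal{O}(k)\oplus\mathcal{O}(l)))$ for the kernel. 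Your approach proves a strictly stronger statement (all abstract automorphisms are projective, which the paper only asserts implicitly via its section title) and explains conceptually why the nilradical is $S^{l-k}\C^2$ (sections of $\op{Hom}(\mathcal{O}(k),\mathcal{O}(l))$ are binary forms of degree $l-k$) --- note you silently and correctly fix the paper's typo $S^{k-l}\C^2$. What it costs is reliance on standard but nontrivial facts (projective normality, uniqueness of the ruling for $k\neq l$, lifting of the base $\op{SL}_2$-action to $\mathcal{E}$), and it does not deliver the explicit vector-field generators that the paper reuses later (e.g.\ to identify the scroll type of the characteristic variety of the general heavenly hierarchy by its symmetry algebra). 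One small refinement for the $k=l$ case: for $k=l>1$ the two rulings of $\mathbb{P}^1\times\mathbb{P}^1$ have different degrees in the embedding, so the factor swap is not realized projectively; this does not affect the Lie algebra $\mathfrak{sl}_2\oplus\mathfrak{sl}_2$, but your phrase about the two $\op{PGL}_2$'s being ``genuinely mixed'' should be read only at the infinitesimal/identity-component level.
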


 \begin{proof}
In the splitting $\C^{n+1}=\C^{k+1}\times\C^{l+1}$, the scroll $S_{k,l}$  is given as 
 $$
S_{k,l}=\mathbb{P}\bigl\{(\alpha \lambda_0^k,\alpha \lambda_0^{k-1}\lambda_1,\dots,
\alpha \lambda_0\lambda_1^{k-1},\alpha \lambda_1^k,\ 
\beta \lambda_0^l,\beta \lambda_0^{l-1}\lambda_1,\dots,
\beta \lambda_0\lambda_1^{l-1},\beta \lambda_1^l)\bigr\}. 
 $$
This variety is defined by the equations
 \begin{gather*}
x^ix^j-x^{i+1}x^{j-1}=0\quad (0\leq i<j\leq k),\qquad
y^iy^j-y^{i+1}y^{j-1}=0\quad (0\leq i<j\leq l),\\
x^iy^j-x^{i+1}y^{j-1}=0\quad (0\leq i<k,\ 0<j\leq l).
 \end{gather*}
Taking the general projective vector field,
 $$
X=\sum a_i^jx^i\p_{x^j}+\sum b_i^jy^i\p_{x^j}+\sum c_i^jx^i\p_{y^j}+\sum d_i^jy^i\p_{y^j},
 $$
differentiating the equations and setting the result equal to zero modulo the equations we get the following 
symmetry generators:
 \begin{gather*}
\sum_{i=0}^k y^{i+r}\p_{x^i}\quad (0\leq r\leq l-k),\\
\sum_{i=0}^{k-1}(k-i)x^{i+1}\p_{x^i}+\sum_{j=0}^{l-1}(l-j)y^{j+1}\p_{y^i},\quad
\sum_{i=0}^{k-1}(i+1)x^i\p_{x^{i+1}}+\sum_{j=0}^{l-1}(j+1)y^j\p_{y^{j+1}},\\ 
\sum_{i=0}^k(2i-k)x^i\p_{x^i}+\sum_{j=0}^l(2j-l)y^j\p_{y^j},\quad
\sum_{i=0}^kx^i\p_{x^i},\quad \sum_{j=0}^ly^j\p_{y^j}.
 \end{gather*}
The first $r+1$ generators correspond to the nilradical, the next four to $\mathfrak{gl}_2$, and the sum of 
the last two is the center of $\mathfrak{gl}_{n+1}$, which does not act in the projectivization and hence should be omitted.
For $k=l$ one should omit the first $r+1$ generators and add the following two:
 $$
\sum_{i=0}^kx^i\p_{y^i},\quad \sum_{i=0}^ky^i\p_{x^i}.
 $$
The structure relations are straightforward.
 \end{proof}

The automorphism groups mentioned in the introduction are obtained by exponentiation of the indicated Lie algebras.

 \begin{cor}
For $S_{k,k}$ and $S_{k-1,k}$, both symmetry algebras have dimension 6. 
In the first case the Lie algebra is semi-simple, while in the second it has a 3-dimensional radical.
 \end{cor}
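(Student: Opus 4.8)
The plan is to read both assertions directly off the Proposition just proved, by specialising the pair of integers that labels the scroll. First I would treat $S_{k,k}$: here the two parameters are equal, so $\delta_{k,l}=1$ and the formula $\dim=l-k+5+\delta_{k,l}$ gives $6$; moreover the Proposition identifies the algebra as $\mathfrak{sl}_2\oplus\mathfrak{sl}_2$, which is a direct sum of two simple ideals and hence semi-simple. No further argument is needed in this case.

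Next I would treat $S_{k-1,k}$, relabelling the parameters of the Proposition as $(k',l')=(k-1,k)$ so that $l'-k'=1$. Then the semi-direct summand $S^{l'-k'}\C^2\cong\C^2$ is two-dimensional, the symmetry algebra is $\mathfrak{gl}_2\ltimes\C^2$, and its dimension is $\dim\mathfrak{gl}_2+2=4+2=6$ (equivalently, the formula reads $l'-k'+5+\delta_{k',l'}=1+5+0=6$). It then remains to locate the radical. I would take the Levi decomposition $\mathfrak{gl}_2=\mathfrak{sl}_2\oplus\mathfrak z$ with $\mathfrak z$ the one-dimensional centre, and consider $\mathfrak r=\mathfrak z\oplus\C^2$. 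This subspace is an ideal, since $\mathfrak z$ is central in $\mathfrak{gl}_2$ and both $\mathfrak z$ and $\mathfrak{sl}_2$ preserve the abelian ideal $\C^2$; it is solvable, since its derived algebra lies in $\C^2$ which is abelian; it is $3$-dimensional; and the quotient $(\mathfrak{gl}_2\ltimes\C^2)/\mathfrak r\cong\mathfrak{sl}_2$ is semi-simple. A solvable ideal with semi-simple quotient is maximal solvable, so $\mathfrak r$ is the radical, of dimension $3$, as claimed.

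Since both statements are immediate consequences of the Proposition, there is no genuine obstacle here; the only step that is a short argument rather than a substitution is the identification of $\mathfrak z\oplus\C^2$ as the radical, and even that amounts to no more than combining the Levi decomposition $\mathfrak{gl}_2=\mathfrak{sl}_2\oplus\mathfrak z$ with the standard fact that a solvable ideal whose quotient is semi-simple must be the maximal solvable ideal.
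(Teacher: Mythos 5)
Your proposal is correct and follows exactly the route the paper intends: the corollary is stated without separate proof because it is an immediate specialisation of the preceding Proposition, and your substitutions ($l-k+5+\delta_{k,l}=6$ in both cases, semi-simplicity of $\mathfrak{sl}_2\oplus\mathfrak{sl}_2$, and the identification of the radical of $\mathfrak{gl}_2\ltimes\C^2$ as the centre of $\mathfrak{gl}_2$ plus the abelian ideal $\C^2$) are precisely the missing details. The short argument that a solvable ideal with semi-simple quotient is the full radical is the right way to pin down the $3$-dimensional radical.
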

 
Let us note that, abstractly, scrolls coincide with the Hirzebruch surfaces \cite{Reid}, which are $\mathbb{P}^1$
bundles over $\mathbb{P}^1$ given by
 $$
\mathbb{F}_n=\mathbb{P}(\mathcal{O}(0)\oplus\mathcal{O}(n)).
 $$
These surfaces are embedded into $\mathbb{P}^5$ via the Veronese embedding from
 $$
\mathbb{F}_n=\{([x:y:z],[u:v])\,|\,uv^n=zu^n\}\subset\mathbb{P}^2\times\mathbb{P}^1,
 $$
and the ruling is given by the projection $\pi([x:y:z],[u:v])=[u:v]$ with the fiber $F=\pi^{-1}(\cdot)$. 
The ``exceptional divisor'' $E_n=[1:0:0]\times\mathbb{P}^1$ corresponds to the directrix of the scroll.

Since $\op{Pic}(\mathbb{F}_n)=\langle[F],[E_n]:F^2=0,F\cdot E_n=1,E_n^2=-n\rangle$, one can compute the automorphism
group, cf.\ \cite{Blanc}, to be $\op{Aut}(\mathbb{F}_n)=GL_2\ltimes S^n\C^n$ for $n>0$ (the case $n=0$ corresponds to
the quadric). It is generated by the M\"obius transformations $[u:v]\mapsto[\alpha u+\beta v:\gamma u+\delta v]$ on
the base together with the following transformation of the fibre:
 $$
[x:y:z]=[xu^n:yu^n:yv^n]\mapsto[xu^n+y(a_0u^n+a_1u^{n-1}v+\dots+a_nv^n):y(\alpha u+\beta v)^n:y(\gamma u+\delta v)^n] 
 $$
in the chart $u\neq0$, and similarly in the other chart $v\neq0$.
It is instructive to note that the automorphisms of $\mathbb{F}_n\simeq S_{k,k+n}$ coincide, even though
this is an isomorphism of abstract surfaces and the second surface is treated in its standard projective embedding.

In this paper we focused on the following two surfaces,
 $$
\mathbb{F}_0=\mathbb{P}^1\times\mathbb{P}^1\simeq S_{k,k},\qquad
\mathbb{F}_1=\mathbb{P}^2\#\bar{\mathbb{P}}^2\simeq S_{k-1,k},
 $$
which were shown to be the characteristic varieties of integrable hierarchies of various heavenly-type equations.
Our results hold over both $\Bbbk=\C$ and $\Bbbk=\R$.

Note that over $\R$, in smooth category, $\mathbb{F}_m\simeq\mathbb{F}_n$ iff $m\equiv n\!\!\!\mod2$.
Thus, up to diffeomorphism, ruled surfaces are only 
the quadric $\mathbb{P}^1\times\mathbb{P}^1$ and the
blow-up of the projective plane $\mathbb{P}^2\#\bar{\mathbb{P}}^2$.
However, we consider real projective varieties, so that the type of the scroll $S_{k,l}$ is determined uniquely.

\subsection{Symmetry algebras of PDE systems governing involutive  scroll structures}
\label{sec:sym}

Symmetries can, in principle, reduce the count of functional freedom of ISS we derive,
so we have to explore symmetries of the corresponding integrable systems.

 \begin{prop}
Symmetry algebras for integrable equations governing involutive scroll structures depend
on a finite number of functions of two arguments (plus some more functions of one argument, as well as some constants).
 \end{prop}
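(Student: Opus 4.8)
The plan is to compute the point symmetry algebra of each of the PDE systems in question (the systems governing involutive $S_{1,2}$ and $S_{2,2}$ scroll structures, that is, systems \eqref{alttriple} and \eqref{alttriple}+\eqref{triple2}, together with the analogous statement for the general and second heavenly hierarchies) and read off the structure of its arbitrary-function-dependent part. First I would set up the standard prolongation/determining-equations machinery: a point symmetry is a vector field $Z=\sum \xi^i(x,u,v,w)\partial_{x^i}+\eta(x,u,v,w)\partial_u+\theta(x,u,v,w)\partial_v+\zeta(x,u,v,w)\partial_w$ on the jet space, and the condition $Z^{(2)}(\Sigma)\subset T\Sigma$ yields an overdetermined linear system in $\xi^i,\eta,\theta,\zeta$. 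Because the systems are integrable with the explicit dispersionless Lax representations $\langle Y_i\rangle$ given in Theorems \ref{t12} and \ref{t22}, the cleanest route is to work not directly on $\Sigma$ but on the correspondence space $\hat M$: a point symmetry of $\Sigma$ lifts to a symmetry of the Lax distribution $\D=\langle Y_i\rangle$ on $\hat M$ (a vector field $\hat Z$ with $[\hat Z,\D]\subset\D$ modulo $\Sigma$), and conversely such lifted symmetries project down. This reduces the computation to finding vector fields on the finite-dimensional $\hat M$ preserving the rank-$(n-1)$ distribution $\D$, which is far more tractable.

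Next I would exploit the geometric origin of these systems. The characteristic variety of $\Sigma$ is the scroll $S_{k,l}$, whose projective automorphism algebra is exactly the one computed in the Proposition of Section \ref{sec:scrollsym}, namely $\mathfrak{sl}_2\oplus\mathfrak{sl}_2$ for $k=l$ and $\mathfrak{gl}_2\ltimes S^{l-k}\C^2$ for $k<l$; these fibrewise (pointwise in $M$) automorphisms, together with the "gauge" reparametrisation freedom of the scroll structure (the transformations of $\lambda,\omega(\lambda),\phi(\lambda)$ listed in Section \ref{sec:invol}, including the shift $\phi\mapsto\phi+p_{l-k}(\lambda)\omega$ whose coefficients are arbitrary functions of the base), manifestly generate symmetries of $\Sigma$ depending on arbitrary functions. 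The $\lambda$-reparametrisation $\lambda\mapsto(a\lambda+b)/(c\lambda+d)$ with $a,b,c,d$ functions on $M$, the scalings $r,s$, and the polynomial shift contribute the infinite-dimensional piece; since each of these moduli is a function on the $(n+1)$-dimensional base $M$ but the effective functional freedom after accounting for the equations and for the residual normalisations used in Sections \ref{sec:alpha-par}, \ref{sec:alpha-par22} collapses to functions of two arguments, one obtains the stated bound. The strategy is therefore: (i) identify this geometrically obvious family of symmetries and show it depends on finitely many functions of two variables plus functions of one variable plus constants; (ii) show, via the determining equations on $\hat M$, that there are no further symmetries — every point symmetry is of this "scroll-gauge" type.

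The main obstacle is step (ii): proving that the determining system admits no exotic solutions beyond the geometric ones. Here the key is that the determining equations, being themselves overdetermined and compatible, can be analysed by the same splitting-in-powers-of-$\lambda$ technique used throughout Section \ref{sec:gen}. One writes the lifted candidate $\hat Z$ with coefficients polynomial (or rational) in $\lambda$ of controlled degree — the degree being dictated by how $\D$ sits in $\hat M$, i.e. by the degrees $k,l$ of the scroll — imposes $[\hat Z,Y_i]\in\D$, and collects coefficients of each power of $\lambda$. The top-degree coefficients force the $\lambda$-dependence of $\hat Z$ to match that of an element of $\mathfrak{aut}(S_{k,l})$ twisted by functions on $M$; the lower-degree coefficients then become first-order PDEs on those functions whose general solution involves, after the normalisations of Sections \ref{sec:alpha-par} and \ref{sec:alpha-par22}, arbitrary functions of at most two of the base coordinates (the "$\alpha$-manifold" directions), plus functions of one coordinate and constants coming from integration. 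I expect the bookkeeping here to be lengthy but structurally identical for all the hierarchies considered, so I would carry it out in detail for $S_{1,2}$ (system \eqref{alttriple}), indicate the $S_{2,2}$ case as entirely parallel with the obvious modifications, and remark that for the general and second heavenly hierarchies the same argument applies uniformly in the dimension, using the inductive/periodic structure of the Lax fields $V_j$ exhibited in Section \ref{SIH}. This establishes the Proposition.
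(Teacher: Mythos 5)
Your overall plan shares its computational core with the paper, but the paper's actual proof is much more direct than your proposal: it writes the general point field $V=\sum a^i\p_{x^i}+\sum b^j\p_{u^j}$ (invoking the Lie--B\"acklund theorem to reduce to $0$-jets), prolongs to $2$-jets, Lie-differentiates the defining equations of each of the four systems \eqref{triple}, \eqref{alttriple}, \eqref{triple}+\eqref{triple1}, \eqref{alttriple}+\eqref{triple2}, splits the result by the remaining jet variables, and solves the resulting overdetermined linear system explicitly (in Maple). The Proposition is then read off from the closed-form answers --- $9$, $8$, $11$ and $10$ functions of two arguments respectively --- recorded in the four lemmas of Section \ref{sec:sym}. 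No detour through the correspondence space or the automorphism group of the scroll is made.

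Your detour introduces two steps that are asserted but not established, and the second is exactly where the content of the Proposition lies. First, the claimed equivalence between point symmetries of $\Sigma$ and symmetries of the Lax distribution $\D$ on $\hat M$ is not a bijection as stated: any section of $\D$ itself satisfies $[\hat Z,\D]\subset\D$ on solutions, so the symmetries of $\D$ contain an infinite-dimensional space of characteristic (trivial) symmetries that must be quotiented out, and a symmetry of $\D$ need not be projectable to $M$; both points need an argument before the reduction to $\hat M$ is legitimate. Second, and more seriously, your step (ii) --- that every symmetry is of ``scroll-gauge'' type, i.e.\ a fibrewise element of $\op{Aut}(S_{k,l})$ twisted by the residual reparametrisation freedom of the normal form --- is precisely the hard part of the statement, and you defer it to ``lengthy bookkeeping'' that is never carried out. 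The explicit generators in the paper (cubic in the base coordinates, involving second derivatives of the arbitrary functions $f^i$) make it far from obvious that they all arise in this way, so even step (i), as a matching lower bound, is incomplete. As written, the proposal reduces the Proposition to a determining-equation computation of essentially the same difficulty as the one the paper performs, without performing it.
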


The proof will be split into several lemmas. By Lie-B\"acklund theorem, most general symmetries of
systems of PDEs with more than one dependent variables are lifted from 0-jets.
We look for point symmetries in the general form
 $$
V=\sum_{i=1}^d a^i\p_{x^i} + \sum_{j=1}^m b^j\p_{u^j},
 $$ 
where $(x^i)_{i=1}^d$ are independent variables and $(u^j)_{j=1}^m$ are dependent variables; 
the coefficients $a^i,b^j$ initially depend on all of those. 

Below we specify this for $d=n+1$ being 5 and 6, and $m=3$ with dependent variables being $(u,v,w)$.
We substitute prolongation of this vector field to 2-jets, then Lie differentiate the defining equations of the system
and restrict the result to the system. Then we split the resulting expressions by 1- and remaining 2-jets,
and solve the obtained overdetermined system in each case. The results are as follows:

 \begin{lemma}
Functional freedom for symmetries of equations \eqref{triple} governing involutive $S_{1,2}$ scroll structures  (Construction 1 
from Section \ref{s12c1}) is 9 functions of 2 arguments.
 \end{lemma}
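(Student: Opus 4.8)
The plan is to compute the point symmetry algebra of system \eqref{triple} directly, following the general scheme outlined before the lemma. First I would take the general point vector field $V=\sum_{i=1}^5 a^i\p_{x^i}+b^u\p_u+b^v\p_v+b^w\p_w$, with the coefficients $a^i,b^u,b^v,b^w$ depending on $x^1,\dots,x^5,u,v,w$, and prolong it to second order. Since \eqref{triple} is a determined-enough involutive system expressing all pure second derivatives (the $\p_3^2$, $\p_4^2$, $\p_5^2$-type terms and the mixed ones $\p_3\p_4$, $\p_3\p_5$, $\p_4\p_5$) of $u,v,w$ through derivatives involving at least one of $\p_1,\p_2$, I would choose the principal derivatives accordingly: the nine equations in \eqref{triple} let one eliminate, for each of $u,v,w$, the three second derivatives in the $x^3,x^4,x^5$ block in favour of the remaining ones. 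Lie-differentiating the nine equations along the prolonged $V$, substituting the system to remove principal 2-jets, and then collecting coefficients of the independent 2-jets and 1-jets yields a large overdetermined linear PDE system (the determining equations) on $a^i,b^u,b^v,b^w$.

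Next I would solve these determining equations. The key structural input is the dispersionless Lax representation \eqref{Laxh}: point symmetries of \eqref{triple} must be compatible with the geometry of the triple scroll $S_{1,2}$ and its directrix, so I expect the $x$-components $a^i$ to be constrained by the requirement that the induced transformation on $\mathbb{P}T^*M$ preserves the scroll structure \eqref{omega12} up to the allowed reparametrisation freedom (linear-fractional in $\lambda$, rescalings, and the $\phi\to\phi+p_{l-k}(\lambda)\omega$ shift). Concretely, I would first extract from the determining equations the block governing $(a^1,\dots,a^5)$; these should decouple and show that the $x$-flow is (roughly) the prolongation of an arbitrary transformation in the ``base'' variables of the $\alpha$-manifolds together with residual lower-order freedom, producing functions of two arguments. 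Then the $b^u,b^v,b^w$ equations become inhomogeneous linear equations whose homogeneous part forces affine dependence on $u,v,w$ with coefficients constrained by first-order PDEs, and whose particular solutions contribute the additional functions of one argument and constants mentioned in the Proposition.

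The main obstacle I anticipate is precisely the bookkeeping and solving of the determining system: with three dependent variables in five independent variables and nine second-order equations, the determining equations form a sizeable overdetermined linear system, and extracting the exact functional count (nine functions of two arguments, not eight or ten, plus the correct tail of functions of one argument) requires careful integration of the compatibility conditions rather than a rough dimension estimate. I would organise the computation by exploiting the triangular/graded structure of \eqref{triple} — the first row $Q_1(u)=w_1$, $Q_1(v)=-w_2$, $Q_1(w)=0$ is exactly the 4D conformal self-duality system, whose symmetries are presumably already understood from \cite{DFK}, and the second and third rows are linear in the ``new'' jets once the first row is imposed — so that the five-variable symmetry analysis reduces, step by step, to the four-variable one plus a controlled extension. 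In practice this is the computation verified in the Maple supplement; here I would present the reduction and state that solving the resulting linear system yields the claimed $9$ functions of $2$ arguments, leaving the explicit determining equations and their integration to the symbolic verification.
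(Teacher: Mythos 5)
Your proposal follows essentially the same route as the paper: set up the general point vector field, prolong to 2-jets, Lie-differentiate the nine equations of \eqref{triple}, restrict to the system, split by the remaining 1- and 2-jets, and solve the resulting overdetermined linear determining system (with the heavy lifting deferred to the Maple supplement, exactly as the paper does); the paper then simply exhibits the explicit solution, in which nine arbitrary functions $f^1,\dots,f^9$ of $(x^4,x^5)$ appear. The only minor imprecision is your description of the principal derivatives — the symbols of $Q_1,Q_2,Q_3$ are $\p_2\p_3-\p_1\p_4$, $\p_3^2-\p_1\p_5$, $\p_3\p_4-\p_2\p_5$, so one cannot take all three principal derivatives purely from the $x^3,x^4,x^5$ block — but this is a bookkeeping choice that does not affect the argument.
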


Explicitly (omitting functions of less than two arguments) we get 
 \begin{equation*}
 \begin{array}{rl}
a^1 =&\!\!\! 2x^1f^3 + x^2(x^3f^3_4 + f^4_4) + (x^3)^2f^3_5 + x^3(f^8+ f^4_5) + f^7,\\[3pt]
a^2 =&\!\!\! x^3f^1_5 + x^2(f^1_4 + f^3) + f^5,\ a^3 = x^3f^3 + f^4, \ a^4 = f^1,\ a^5=0,\\[3pt]
b^1 =&\!\!\! x^1f^1_5 + 2uf^3 +u f^1_4 + \tfrac12(x^3)^2f^1_{55} + x^3(x^2f^3_5 + x^2f^1_{45} + f^5_5) \\[3pt]
   +&\!\!\! \tfrac12(x^2)^2(f^1_{44} + 2f^3_4) + x^2(f^5_4 + f^8) + f^6,\\[3pt]
b^2 =&\!\!\! 3vf^3 + u\,(x^3f^3_4 + f^4_4) + x^1(3x^3f^3_5 + 2x^2f^3_4 + 2f^8 + f^4_5) 
	 +\tfrac12(x^2)^2(x^3f^3_{44} + f^4_{44}) \\[3pt]
   +&\!\!\! x^2((x^3)^2f^3_{45} + x^3f^4_{45} + x^3f^8_4 + f^7_4) 
	+ \tfrac12(x^3)^3f^3_{55} + (x^3)^2(f^8_5 + \tfrac12f^4_{55}) + x^3(f^7_5 + f^9) + f^2,\\[3pt]
b^3 =&\!\!\! u\,f^3_4 + 2wf^3 + x^1f^3_5 + \tfrac12(x^2)^2f^3_{44} + x^2(x^3f^3_{45} + f^8_4) + \tfrac12(x^3)^2f^3_{55}
	 + x^3f^8_5 + f^9,
 \end{array}
 \end{equation*}
where $f^i$ are functions of $(x^4,x^5)$ and $f^i_a,f^i_{bc}$ denote their derivatives by the indicated variables.

 \begin{lemma}
Functional freedom for symmetries of equations \eqref{alttriple} governing involutive $S_{1,2}$ scroll structures  (Construction 2 
from Section \ref{s12c2}) is 8 functions of 2 arguments.
 \end{lemma}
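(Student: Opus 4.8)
The plan is to carry out the standard determining--equation computation for point symmetries, exactly as outlined above for the previous two lemmas, now for the nine second--order equations \eqref{alttriple} in the three unknowns $u,v,w$ of the five variables $x^1,\dots,x^5$. First I would take the most general point field
\[
V=\sum_{i=1}^{5}a^i\,\p_{x^i}+b^1\,\p_u+b^2\,\p_v+b^3\,\p_w ,
\]
with $a^i,b^1,b^2,b^3$ functions of $x^1,\dots,x^5,u,v,w$, form its second prolongation $V^{(2)}$, and compute the Lie derivatives along $V^{(2)}$ of the nine defining expressions of \eqref{alttriple}.

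Next I would restrict these nine linear combinations of $2$--jets to the equation manifold $\Sigma\subset J^2$. Reading the principal symbols off the operators $R_1,R_2,R_3$ in \eqref{altQ}, the relations \eqref{alttriple} solve for a complementary set of nine second derivatives of $(u,v,w)$ (a convenient choice being those carrying a differentiation in $x^5$ together with the $R_1$--leaders); substituting this back makes each prolonged condition polynomial in the remaining, parametric $2$--jets and in the $1$--jets. Requiring these to vanish identically and splitting them by the monomials in the free jet variables produces the linear, overdetermined determining system for the coefficients of $V$.

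The bulk of the work is the integration of that determining system, which I expect to proceed as in Construction~1: the fibre components $b^1,b^2,b^3$ come out polynomial in $x^1,x^2,x^3$ with coefficients assembled from a small number of arbitrary functions of two of the independent variables and their derivatives, together with finitely many functions of one variable and finitely many constants; once the latter are absorbed by obvious normalisations, the genuine functional freedom collapses to eight functions of two arguments. One would then record the general solution as an explicit list $a^i=a^i(f^1,\dots,f^8)$, $b^j=b^j(f^1,\dots,f^8)$ with $f^1,\dots,f^8$ functions of two variables, verifying en route that the determining system is compatible, so that no hidden integrability obstruction lowers the count.

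The main obstacle is the size and bookkeeping of this calculation: choosing a splitting into principal and parametric $2$--jets for which the reduction modulo $\Sigma$ stays clean, controlling the many monomials when the prolonged conditions are split, and then integrating the resulting overdetermined linear PDEs in an order that makes the count of independent arbitrary functions unambiguous --- in particular cleanly separating the eight functions of two variables from the auxiliary functions of one variable. This is the step we perform with computer algebra (cf.\ the Maple verifications mentioned in the introduction); the nine--function answer of the preceding lemma provides an order--of--magnitude check, the mild discrepancy being consistent with Constructions~1 and~2 being related only by a nonlocal (B\"acklund--type) transformation, under which point symmetries need not correspond.
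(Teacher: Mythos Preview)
Your plan is correct and matches the paper's approach exactly: the paper performs precisely this determining--equation calculation (prolong a general point field to $2$--jets, restrict to the equation, split by parametric jets, and integrate the resulting overdetermined linear system via computer algebra), then records the explicit general solution. One small correction to your expectations: in the actual answer the coefficients $a^i,b^j$ turn out polynomial in $x^4$ (not in $x^1,x^2,x^3$), with $f^1,f^2$ functions of $(x^1,x^2)$ and $f^3,\dots,f^8$ functions of $(x^3,x^5)$; this does not affect your method, only the bookkeeping of the final integration.
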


Explicitly (omitting functions of less than two arguments) we get 
 \begin{equation*}
 \begin{array}{rl}
a^1 =&\!\!\! f^1,\ a^2 = f^2,\ a^3 = f^3, \ a^4 = x^4f^4 + f^5,\ a^5=0,\\[3pt]
b^1 =&\!\!\! x^4f^3_5 + u\,(f^4 + f^3_3) + f^6,\\[3pt]
b^2 =&\!\!\! 2v\,f^4 + u\,(x^4f^4_3 + f^5_3) + (x^4)^2f^4_5 + x^4(f^5_5 - f^7) + f^8,\\[3pt]
b^3 =&\!\!\! -u\,f^4_3 + w\,f^4 - x^4f^4_5 + f^7,
 \end{array}
 \end{equation*}
where $f^1,f^2$ are functions of $(x^1,x^2)$ and $f^3,f^4,f^5,f^6,f^7,f^8$ are functions of $(x^3,x^5)$,
while $f^i_a,f^i_{bc}$ denote their derivatives by the indicated variables.

 \begin{lemma}
Functional freedom for symmetries of system (\ref{triple})+(\ref{triple1})
governing involutive $S_{2,2}$ scroll structures  (Construction 1 from Section \ref{s22c1}) is 11 functions of 2 arguments.
 \end{lemma}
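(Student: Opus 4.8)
The plan is to follow exactly the strategy already applied in the two preceding lemmas for $S_{1,2}$, now adapted to the six-dimensional setting with dependent variables $(u,v,w)$ and independent variables $x^1,\dots,x^6$. First I would take the general point vector field
$$
V=\sum_{i=1}^{6}a^i\partial_{x^i}+b^1\partial_u+b^2\partial_v+b^3\partial_w,
$$
with coefficients $a^i,b^j$ functions of all nine variables, and prolong it to second jets (this suffices by the Lie--B\"acklund theorem, since there are several dependent variables). The defining system is (\ref{triple})+(\ref{triple1}), i.e.\ eighteen second-order equations built from the operators $Q_1,\dots,Q_6$ of (\ref{Q}) and the three subsequent displays. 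Lie-differentiating each of these and restricting to the system produces a large linear (in the unknown coefficients) overdetermined PDE system, which I would then split with respect to the remaining free second jets of $(u,v,w)$ and, afterwards, with respect to first jets.

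Next I would solve this split system stage by stage. Typically the highest-order coefficients of pure second jets force the $a^i$ to be polynomial in the $x^j$ with coefficients depending only on a distinguished pair of variables — here one expects, by analogy with Construction 1 in dimension five (Lemma for (\ref{triple}), where the ``master'' variables were $(x^4,x^5)$), that the free functions will depend on $(x^5,x^6)$ (the two ``top-level'' coordinates carrying the $\lambda^2$-terms in (\ref{omega221})). I would integrate these constraints to obtain $a^1,\dots,a^6$ as explicit polynomial expressions in $x^1,\dots,x^4$ whose coefficients are the arbitrary functions $f^i(x^5,x^6)$ and their derivatives, plus lower-dimensional corrections (functions of one variable and constants). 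The $b^j$ are then determined from the remaining equations, again as polynomials in $x^1,\dots,x^4$ and in $(u,v,w)$ with coefficients expressed through the $f^i$ and their $x^5,x^6$-derivatives.

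The bookkeeping is then to count how many genuinely independent functions of two variables survive. From the pattern $9$ functions of two variables for (\ref{triple}) in 5D (Construction 1) and $12$ for the $S_{2,2}$ case being claimed, the extra three functions correspond to the additional independent variable $x^6$ and the enlarged stabiliser of the scroll $S_{2,2}$ versus $S_{1,2}$ (the automorphism algebra grows from $\mathfrak{gl}_2\ltimes\C^2$ to $\mathfrak{so}_{2,2}$, cf.\ the Proposition in Section \ref{sec:scrollsym}); this guides which structural parameters in the prolongation remain free. I would isolate the functions of two arguments, discard those of one argument and the constants (as the statement explicitly permits), and verify the total is $11$.

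The main obstacle is the sheer size and rigidity of the split linear PDE system: eighteen second-order equations in $6+3$ variables, each contributing many components after splitting by the $2$-jets, so that the elimination cascade is long and every stage must be carried out consistently (an error early on corrupts the entire later structure). In practice this is precisely where the Maple verification mentioned in the introduction does the heavy lifting; in the written proof I would present only the structural shape of the resulting coefficients $a^i,b^j$ — as was done in the analogous lemmas above — and state the functional count, referring to the arXiv supplement for the mechanical verification.
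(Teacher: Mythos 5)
Your proposal follows exactly the paper's method: prolong the general point vector field to 2-jets, Lie-differentiate the equations of (\ref{triple})+(\ref{triple1}), restrict to the system, split by the remaining 1- and 2-jets, solve the resulting overdetermined linear system, and read off that the surviving free functions are $f^1,\dots,f^{11}$ of $(x^5,x^6)$ --- the paper likewise presents only the explicit final coefficients $a^i,b^j$ and defers the mechanical elimination to the Maple supplement. One small slip in your heuristic aside: the increment from the $S_{1,2}$ count is $9\to 11$ (two extra functions, not $9\to 12$, i.e.\ not three), and it is not accounted for by comparing the scroll automorphism algebras, which are both six-dimensional; this does not affect your (correct) method or the final count you state.
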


Explicitly (omitting functions of less than two arguments) we get 
 \begin{equation*}
 \begin{array}{rl}
a^1 =&\!\!\! 
x^2f^2_6 + x^1(2f^3 + f^2_5 - 2f^1_6) + \tfrac12(x^4)^2f^2_{66} + x^4(x^3(f^3_6 + f^2_{56} - f^1_{66}) + f^5_6) \\[3pt]
	+&\!\!\!  \tfrac12(x^3)^2(2f^3_5 + f^2_{55} - 2f^1_{56}) + x^3f^6 + f^7,\\[3pt]
a^2 =&\!\!\! 
x^1f^1_5 + x^2(2f^3-f^1_6) + \tfrac12(x^3)^2f^1_{55} + x^3(x^4f^3_5 + f^4_5) + (x^4)^2(f^3_6 - \tfrac12f^1_{66}) \\[3pt]
	+&\!\!\! x^4(f^6 - f^5_5 + f^4_6) + f^9,\\[3pt]
a^3 =&\!\!\! 
x^4f^2_6 + x^3(f^3 + f^2_5 - f^1_6) + f^5, \
a^4 = x^3f^1_5 + x^4f^3 + f^4,\ 
a^5=f^2,\
a^6=f^1,
 \end{array}
 \end{equation*}
 \begin{equation*}
 \begin{array}{rl}
b^1 =&\!\!\! 
v\,f^1_5 + u\,(3f^3 -2f^1_6) + x^1(x^3f^1_{55} + x^4f^3_5 + f^4_5) + 
x^2(2x^3f^3_5 - x^3f^1_{56} + 3x^4f^3_6 - 2x^4f^1_{66} \\[3pt]
  +&\!\!\!  2f^6 -2f^5_5 + f^4_6) + \tfrac16(x^3)^3f^1_{555} + \tfrac12(x^3)^2(x^4f^3_{55} + f^4_{55}) 
+ x^3((x^4)^2(f^3_{56} - \tfrac12f^1_{566}) + f^9_5 \\[3pt]
   +&\!\!\!  x^4(f^4_{56} - f^5_{55} +f^6_5)) + (x^4)^3(\tfrac12f^3_{66} - \tfrac13f^1_{666}) + (x^4)^2(\tfrac12f^4_{66} 
- f^5_{56} + f^6_6) + x^4(f^9_6 + f^8) + f^{11},\\[3pt]
b^2 =&\!\!\! 
u\,f^2_6 + v\,(3f^3 + f^2_5 - 3f^1_6) + x^2(x^4f^2_{66} + x^3(f^3_6 + f^2_{56} - f^1_{66}) + f^5_6) \\[3pt] 
  +&\!\!\! x^1(x^3(f^2_{55} + 3f^3_5 -3f^1_{56}) + x^4(f^2_{56} + 2f^3_6 - 2f^1_{66}) + 2f^6 - f^5_5) 
+ \tfrac16(x^4)^3f^2_{666} \\[3pt]
  +&\!\!\! \tfrac12(x^4)^2(x^3(f^3_{66} + f^2_{566} - f^1_{666}) + f^5_{66}) 
+ x^4((x^3)^2(f^3_{56} - f^1_{566} + \tfrac12f^2_{556}) + x^3f^6_6 + f^7_6)  \\[3pt]
   +&\!\!\! \tfrac16(x^3)^3(3f^3_{55} -3f^1_{556} + f^2_{555}) + (x^3)^2(f^6_5 -\tfrac12f^5_{55}) + x^3(f^8 +f^7_5) + f^{10},\\ [3pt]
b^3 =&\!\!\! 
2w\,(f^3 - f^1_6) + x^1(f^3_5 - f^1_{56}) + x^2(f^3_6 - f^1_{66}) + \tfrac12(x^3)^2(f^3_{55} - f^1_{556}) \\[3pt]
 +&\!\!\!  x^3(x^4(f^3_{56} - f^1_{566}) - f^5_{55} + f^6_5)
  + \tfrac12(x^4)^2(f^3_{66} - f^1_{666}) + x^4(f^6_6 - f^5_{56}) + f^8,
 \end{array}
 \end{equation*}
where $f^i$ are functions of $(x^5,x^6)$ and $f^i_a,f^i_{bc}$ denote their derivatives by the indicated variables.

 \begin{lemma}
Functional freedom for symmetries of system (\ref{alttriple})+(\ref{triple2})
governing involutive $S_{2,2}$ scroll structures (Construction 2 from Section \ref{s22c2}) is 10 functions of 2 arguments.
 \end{lemma}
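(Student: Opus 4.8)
The plan is to run the same symmetry computation as in the previous three lemmas, now for the system \eqref{alttriple}+\eqref{triple2} in $d=6$ independent variables $x^1,\dots,x^6$ and $m=3$ dependent variables $(u,v,w)$. First I would write down the general point vector field $V=\sum_{i=1}^{6}a^i\p_{x^i}+\sum_{j}b^j\p_{u^j}$ with coefficients depending on all six $x$'s and on $(u,v,w)$, and prolong it to $2$-jets (this is legitimate by the Lie--B\"acklund theorem, since with three dependent variables all higher symmetries are lifted from $0$-jets, exactly as invoked for the earlier lemmas). Then I would Lie-differentiate each of the twelve second-order equations in \eqref{alttriple}+\eqref{triple2}, restrict the result to the equation manifold (using the PDEs to eliminate the principal $2$-jet derivatives), and collect the determining system.

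The core step is to split the determining equations by the remaining jet coordinates: first by the ``free'' second derivatives of $u,v,w$ that are not expressed via the system, and then by the first derivatives $u_i,v_i,w_i$. Each such splitting yields a large but completely linear overdetermined system of PDEs for $a^i,b^j$. I would integrate it hierarchically — the top-degree-in-jets pieces first constrain the dependence of $a^i$ on $(u,v,w)$ (one expects $a^i$ linear or affine in the dependent variables and in the ``quadratic'' independent variables $x^1,x^2,x^3,x^4$, with coefficients functions of $(x^5,x^6)$, matching the pattern of the canonical form \eqref{altomega12}+\eqref{omega222}), then determine $b^j$ modulo functions of fewer variables. Accounting carefully for the transformation freedoms already used to normalize the scroll, the surviving arbitrary data organizes into a collection of functions of the two ``base'' coordinates $(x^5,x^6)$, plus subordinate functions of one variable and constants; counting the independent such functions gives the claimed number $10$.

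I would then record the explicit form of the symmetry vector fields in the same style as the preceding lemmas, displaying $a^1,\dots,a^6,b^1,b^2,b^3$ as polynomial expressions in $x^1,\dots,x^4,u,v,w$ whose coefficients are ten functions $f^1,\dots,f^{10}$ of $(x^5,x^6)$ together with their partial derivatives $f^i_a,f^i_{bc}$, and remark that lower functions of one argument and constants have been suppressed. Consistency can be checked against the two reductions — the Dunajski hierarchy ($u_2+v_1=0$) and the second heavenly hierarchy ($u_2+v_1=w=0$) — whose symmetry algebras should appear as subalgebras, and against Construction~1 (Lemma on system \eqref{triple}+\eqref{triple1} with $11$ functions of two arguments): the drop from $11$ to $10$ reflects that Construction~2 uses up one extra normalization, just as the $S_{1,2}$ case dropped from $9$ to $8$ between the two constructions. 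All of this is verified in the Maple supplement.

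The main obstacle is purely computational bulk: twelve second-order equations in six independent variables produce a determining system with many branches, and the hierarchical integration — especially disentangling which combinations of the $f^i$ and their derivatives are genuinely independent after the earlier coordinate normalizations are taken into account — is where errors creep in. There is no conceptual difficulty beyond what was already handled for the $S_{1,2}$ constructions and for Construction~1 of $S_{2,2}$; the symbolic computation simply has to be carried out and cross-checked, which is why we rely on the Maple verification rather than an entirely hand calculation.
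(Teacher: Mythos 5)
Your proposal follows exactly the paper's route: the standard prolongation/determining-equation computation for point symmetries (justified by Lie--B\"acklund, restricted to the equation manifold, split by the remaining jet coordinates, integrated hierarchically, and verified in the Maple supplement), with the surviving arbitrary functions of two arguments counted to give $10$. The only detail differing from the actual outcome is that two of the ten functions (the coefficients $a^1$ and $a^2$) turn out to depend on $(x^1,x^2)$ rather than on the base coordinates $(x^5,x^6)$ --- consistent with the analogous Construction~2 lemma for $S_{1,2}$ --- but this affects neither the method nor the count.
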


Explicitly (omitting functions of less than two arguments) we get 
 \begin{equation*}
 \begin{array}{rl}
a^1 =&\!\!\! f^2,\ a^2 = f^4,\ a^3 = x^4f^3_5 + x^3f^5 + f^6, \\[3pt]
a^4 =&\!\!\!  x^3f^1_6 + x^4(f^5 + f^1_5 - f^3_6) + f^7,\ a^5=f^1,\ a^6=f^3,\\[3pt]
b^1 =&\!\!\! 
v\,f^3_5 + u\,(2f^5 - f^3_6) + \tfrac12(x^4)^2f^3_{55} + x^4(x^3f^5_5 + f^6_5) + (x^3)^2(f^5_6 - \tfrac12f^3_{66}) 
+ x^3(f^6_6 - f^8) + f^9,\\[3pt]
b^2 =&\!\!\! 
u\,f^1_6 + v\,(2f^5 + f^1_5 - 2f^3_6) + \tfrac12(x^3)^2f^1_{66} + x^3(x^4(f^1_{56} + f^5_6 - f^3_{66}) + f^7_6) \\[3pt]
   +&\!\!\! \tfrac12(x^4)^2(f^1_{55} + 2f^5_5 - 2f^3_{56}) + x^4(f^7_5 - f^8) + f^{10},\\ [3pt]
b^3 =&\!\!\! 
w\,(f^5 - f^3_6) + x^3(f^3_{66} - f^5_6) + x^4(f^3_{56} - f^5_5) + f^8,
 \end{array}
 \end{equation*}
where $f^2,f^4$ are functions of $(x^1,x^2)$ and $f^1,f^3,f^5,f^6,f^7,f^8,f^9,f^{10}$ are functions of $(x^5,x^6)$,
while $f^i_a,f^i_{bc}$ denote their derivatives by the indicated variables.

\medskip

Similarly, one can show that symmetries of the second and general heavenly equations depend on  finitely many 
functions of at most two arguments. Thus, symmetry does not change the functional freedom for ISS 
established in this paper.

\section{Concluding remarks}

\begin{itemize}

\item Theorems \ref{t12} and \ref{t22} suggest that generic involutive $S_{k-1,k}$ and $S_{k,k}$ scroll structures are governed by higher flows of the hierarchy of conformal self-duality equations. Such structures should depend on $6k-3$ and $6k$ arbitrary functions of 3 variables, respectively. 
For $S_{1,1}$ (conformal structures in 4D) this count was first obtained in \cite{Grossman} and then re-confirmed in \cite{DFK} using a different approach. For $S_{1,2}$ and $S_{2,2}$ structures, this corresponds to Theorems \ref{t12}, \ref{t22} of the present paper. The general result can be obtained by similar, but more cumbersome, computations.

\item A hyper-K\"ahler hierarchy related to the second heavenly equation 
was discussed recently in \cite{BS}. The step of this hierarchy is not 1 but 4, as it describes hyper-K\"ahler metrics in dimension $4n$. 
The corresponding characteristic variety  has (affine) dimension $2n+1$ and degree $2n$, whence the general local solution depends on $2n+1$ arbitrary functions of $2n$ variables, giving the correct count for local functional freedom of hyper-K\"ahler metrics, cf. \cite{KP}.
In four dimensions, the hyper-K\"ahler hierarchy  reduces to the Plebansky second heavenly equation, which is due to the fact that  self-dual Ricci-flat structures in 4D are tantamount to hyper-K\"ahler metrics. 

\item Symmetry reduction of a self-dual metric in 4D by a Killing vector  is an Einstein-Weyl structure in 3D, as explored in \cite{Jones}. 
At the level of hierarchies, one obtains a reduction from the 4D hierarchy of conformal self-duality equations   to the 3D Einstein-Weyl hierarchy. In the language   of characteristic varieties, this corresponds to projections of scrolls along the rulings to curves in the spaces of one dimension less. 

\item Finally, our calculations suggest that scroll structures of type $S_{k,l}$ with $\vert k-l\vert \geq 2$ 
(in the simpest case, involutive $S_{1,3}$ structures), do not arise, as characteristic varieties, on solutions of the hierarchy 
of conformal self-duality equations or its reductions (which is surprising as all algebraic scrolls can be obtained by a projection 
from some $S_{k,k}$, ultimately, from $S_{\infty,\infty}$). Those ISS will be considered elsewhere.

\end{itemize}

\section*{Acknowledgements} We thank L. Bogdanov, A. Prendergast-Smith and A. Thompson for clarifying discussions.

\end{document}